\newtheorem{theorem}{Theorem}
\newtheorem{lemma}{Lemma}
  \providecommand\BibTeX{{%
    \normalfont B\kern-0.5em{\scshape i\kern-0.25em b}\kern-0.8em\TeX}}}
\begin{document}

\title{Secure Computation Framework for Multiple Data Providers Against Malicious Adversaries}


\author{Zhili Chen, Xin Chen}
\affiliation{%
  \institution{School of Computer Science and Technology, Anhui University}
  \city{Hefei 230601}
  \country{China}}
\email{zlchen@ahu.edu.cn, 424049316@qq.com}








\begin{abstract}
  Due to the great development of secure multi-party computation, many practical secure computation schemes have been proposed. As an example, different secure auction mechanisms have been widely studied, which can protect bid privacy while satisfying various  economic properties. However, as far as we know, none of them solve the secure computation problems for multiple data providers (e.g., secure cloud resource auctions) in the malicious security model. In this paper, we use the techniques of cut-and-choose and garbled circuits to propose a general secure computation framework for multiple data providers against malicious adversaries. Specifically, our framework checks input consistency with the cut-and-choose paradigm, conducts maliciously secure computations by running two independent garbled circuits, and verifies the correctness of output by comparing two versions of outputs. Theoretical analysis shows that our framework is secure against a malicious computation party, or a subset of malicious data providers. Taking secure cloud resource auctions as an example, we implement our framework. Extensive experimental evaluations show that the performance of the proposed framework is acceptable in practice.


\end{abstract}

\begin{CCSXML}
<ccs2012>
   <concept>
       <concept_id>10002978.10002991.10002995</concept_id>
       <concept_desc>Security and privacy~Privacy-preserving protocols</concept_desc>
       <concept_significance>500</concept_significance>
       </concept>
   <concept>
       <concept_id>10002978.10003014.10003015</concept_id>
       <concept_desc>Security and privacy~Security protocols</concept_desc>
       <concept_significance>500</concept_significance>
       </concept>
 </ccs2012>
\end{CCSXML}

\ccsdesc[500]{Security and privacy~Privacy-preserving protocols}
\ccsdesc[500]{Security and privacy~Security protocols}

\keywords{Garbled circuits, Cut-and-choose, Cloud resource auction, Malicious adversaries, Secure multiparty computation}


\maketitle

\section{Introduction}

The notion of secure multiparty computation (SMC) allows a number of parties to compute a function over their inputs jointly, while protects the input of each party from leaking to others. First introduced by Yao \cite{yao1986generate}, SMC has undergone a major development in the recent one or two decades. Particularly, in the two-party computation case, numerous practical secure computation schemes have been proposed, such as \cite{blanton2012secure,mood2014reuse,carter2014whitewash,carter2016secure} etc.

With the rapid development of cloud computing technology, cloud resource auction has attracted a lot of research attention. Amazon EC2 provides an effective auction mechanism called {\itshape Spot Instance}, which is used to allocate virtual machine (VM) instances. There are various other cloud resource auction mechanisms that satisfy different properties, such as truthfulness, social welfare maximization, etc \cite{wang2012cloud,zaman2013combinatorial,shi2014rsmoa,zhang2014dynamic}. However, the security of cloud resource auction is rarely taken into account. Without security guarantees, some sensitive information (e.g., bids, locations) may be leaked from the auction; the computation process may be deviated; and the auction output may be maliciously tampered. The security issues of cloud resource auction can become extremely important for companies or organizations, which may cause irreparable loss due to the unavailability of cloud resources.


To address the above problems, Chen {\itshape et al.} \cite{chen2016privacy} proposed a privacy-preserving cloud resource auction protocol using techniques of garbled circuits. The protocol guarantees that no more information about user bids will be disclosed except what can be inferred from the auction outcome, which means that the privacy of the auction is well protected. In a different auction scenario, Cheng {\itshape et al.} \cite{cheng2019towards} proposed a privacy-preserving double auction protocol for cloud markets based on secret sharing and garbled circuits. There is also a line of work on secure spectrum auctions \cite{chen2014ps,chen2015itsec,chen2016towards,chen2017secure}. These schemes preserved privacy for a variety of spectrum auction scenarios, by applying garbled circuits, secret sharing, homomorphic encryption, and others. However, current secure schemes for auctions, especially cloud resource auctions, only provides the security in the presence of semi-honest adversaries. Once there is a malicious party, these schemes are not sufficient to guarantee the security of auctions.


Motivated by the above observations, in this paper, we propose a general secure computation framework for multiple data providers against malicious adversaries. Our framework is originated from secure auctions, but is also suitable for specific scenarios of secure computations that meet the following conditions. First, the encrypted input data is provided by a number of data providers. Secondly, there are two non-colluding computation parties, who receive the encrypted input data and perform secure computations. Finally, the encrypted computation results are returned to data providers without the computation parties knowing anything about the plain results. Such kind of secure computations is also perfectly suitable for solving the hot problem of ``data island'' in the big data area, where multiple data owners want to analyze their data jointly and securely.

Our secure computation framework is depicted in Fig.~\ref{fig:framework}. To achieve security against malicious adversaries, two independent garbled circuits are run by two non-colluding computation parties $P_1$ and $P_2$, with the roles of parties swapped. Namely, for the first garbled circuit, $P_1$ is the garbler and $P_2$ is the evaluator, while for the second $P_2$ is the garbler and $P_1$ is the evaluator. For each garbled circuit, the data providers submit their inputs by committing to the encodings (i.e., pairs of $0$- and $1$- garbled values) and labels (i.e., $0$- or $1$- garbled values) of the inputs. They submit commitments to multiple independent copies of encodings and labels, so that the input consistency (that the input values for the two garbled circuits should be the same) can be checked through the cut-and-choose paradigm. Furthermore, the data providers, who submit inconsistent inputs, can be identified accurately with a cheating proof. In this way, for each garbled circuit computation, the garbler gets input encodings, and the evaluator gets input labels, without using oblivious transfers. The two independent garbled circuits are run simply using the standard techniques of garbled circuits, with exception that $P_1$ and $P_2$ do not decode the two garbled outputs. Instead, all the output encodings and labels are committed and published to all data providers, and the encodings and labels corresponding to its output are opened to each data provider. The data providers then decode the two versions of garbled outputs, and verify if both outputs are equal. The output verification results remain private from both $P_1$ and $P_2$, and can be proved to other data providers by revealing the correct openings from both $P_1$ and $P_2$.


\begin{figure}[ht]
  \centering
  \includegraphics[width=0.45\textwidth]{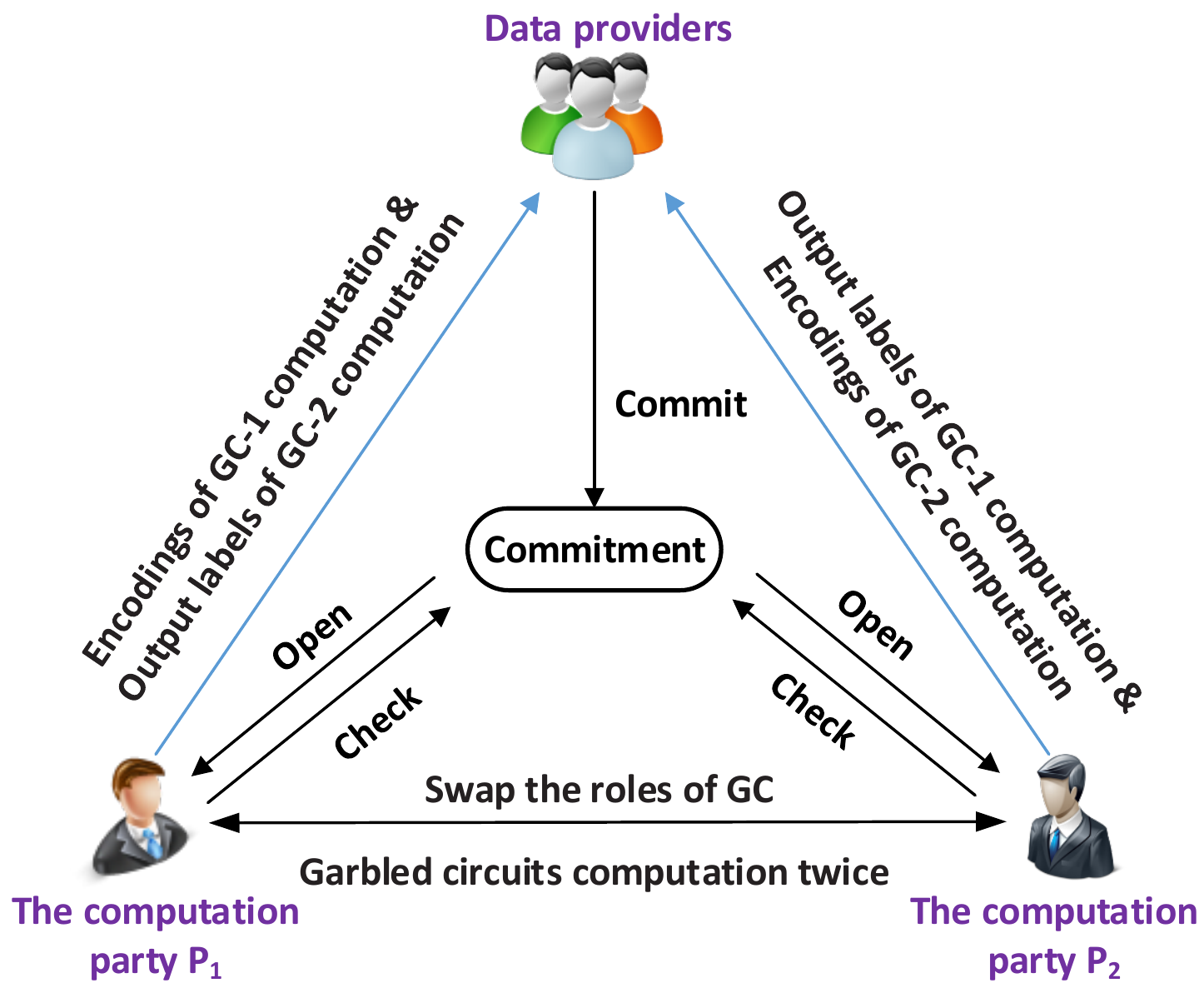}
  \caption{Secure computation framework for multiple data providers against malicious adversaries.}
  \label{fig:framework}
\end{figure}

The design of our secure computation framework for multiple data providers against malicious adversaries includes the following challenges. First of all, to achieve malicious security, it is necessary to run two independent garbled circuits, which leads to the problem of input consistency. It is challenging to guarantee the consistency of inputs to the two garbled circuits, while leaking no information about the inputs to both $P_1$ and $P_2$. Secondly, the logic of a garbled circuit may be tampered by the party who generates the circuit, how to prevent such attacks requires elaborate designs. Last but not least, in the presence of a malicious computation party or a set of malicious data providers, how to verify the outputs is still unknown. To address the above challenges, we should carefully design the input, computation, and output phases of the framework jointly, so that the entire protocol achieves security against malicious adversaries.

Our main contributions can be summarized as follows.
\begin{itemize}
  \item We propose a general secure computation framework for multiple data providers in the presence of malicious adversaries. In our context, this framework achieves malicious security by only running two independent garbled circuits, and eliminates the use of an oblivious transfer in garbled circuit computations.
  \item We design an input consistency check mechanism to ensure that all data providers submit the same input values for the computation of two independent garbled circuits. Furthermore, this mechanism can identify which data providers are cheating exactly by providing a cheating proof.

  \item We apply our framework to implement a secure cloud resource auction mechanism against malicious adversaries, and do extensive experiments to evaluate its performance. Experiments show that the performance of our framework is acceptable in practice.
\end{itemize}

The remainder of the paper is organized as follows. Section~\ref{sec:relatedwork} reviews related work briefly. Section~\ref{sec:preliminaries} introduces some technical preliminaries. We present the detailed design and security analysis of our framework in Section~\ref{sec:protocolframework}. In Section~\ref{sec:experiments}, we apply our framework to implement a secure cloud resource auction, and evaluate the performance in term of computation and communication overheads. In the end, we summarize the paper in Section~\ref{sec:conclusion}.

\section{Related Work}
\label{sec:relatedwork}

\subsection{Secure Multiparty Computation}
There is a large amount of literature on secure multiparty computation. However, since in this paper we mainly use the techniques of garbled circuits and cut-and-choose, we review only related work concerning these two techniques.

There are numerous fundamental studies on garbled circuits. The``point-and-permute'' method \cite{beaver1990round} makes computing a garbled gate need only to decode one ciphertext, which greatly improves the computation efficiency. The technique of ``Free-XOR'' \cite{kolesnikov2008improved} enables that the computation of garbled XOR gate only needs to XOR the corresponding input labels directly, which  improves both computation and communication efficiency. In \cite{kolesnikov2009improved}, the authors described some effective Boolean circuit for garbled circuit construction. These studies mainly focus on improving the performance of garbled circuits in the semi-honest setting, and they are compatible with our work.

Cut-and-choose is usually used as a generic technique for enabling security against malicious adversaries. In \cite{hazay2010efficient}, the authors showed how to obtain secure two-party computation in the presence of malicious adversaries by the cut-and-choose paradigm. Later, Huang {\itshape et al.} \cite{huang2013efficient} proposed a secure two-party computation protocol using a symmetric cut-and-choose to improve efficiency. This line of work also includes \cite{lindell2015efficient,lindell2012secure,lindell2016fast,kolesnikov2017duplo,katz2018optimizing}, and these studies have improved on the original cut-and-choose scheme in different methods. There are still schemes secure against malicious adversaries without using cut-and-choose. For example, Wang {\itshape et al.} \cite{wang2017authenticated} proposed an effective constant-round protocol framework for maliciously secure two-party computation, by constructing a single authenticated garbled circuit. Cut-and-choose garbled circuits are still the mainstream methods for achieving security against malicious adversaries due to efficiency. Different from these studies, our work focus on secure computation in a specific context such as auctions. By using the context, we are able to achieve security against adversaries with only two independent runs of garbled circuits, and thus greatly improve the efficiency.

There are also a few secure computation schemes that consider specific contexts and achieve security against malicious adversaries. The closest work is literature \cite{blanton2012secure}, in which the authors proposed a secure and efficient outsourcing scheme for sequence
comparisons against malicious adversaries, also with two independent garbled circuit runs. Indeed, we get some inspirations from this work. However, this work considered only a single data provider and assumed implicitly that the data provider is honest. This is a strong assumption, and reduces its applicability. Instead, our work considers multiple data providers who may be maliciously corrupted. The schemes \cite{carter2014whitewash,carter2016secure} securely outsource cut-and-choose garbled circuit computation to a cloud provider in the presence of malicious adversaries, greatly reducing the overhead of mobile devices who conduct the computations. Their computation models are quite different from ours in that there is only a mobile user involved in the secure computation. Paper \cite{mood2014reuse} proposed a secure function evaluation system that allows the reuse of encrypted values generated in a garbled circuit computation. This work is compatible with our work, and can be applied in our framework in the future.


\subsection{Secure Cloud Resource Auction}
In the past few years, a large number of truthful auction mechanisms for cloud resource allocation have been proposed. Wang {\itshape et al.} \cite{wang2012cloud} proposed an effective truthful auction mechanism, which used a greedy allocation and a critical value payment to enable fair competition for cloud resources. In \cite{zaman2013combinatorial2}, the authors proposed two allocation mechanisms based on combinatorial auction, CA-LP and CA-GREEDY, to allocate cloud virtual machine resources. The mechanisms based on combinatorial auction can significantly improve the efficiency of allocation and generate higher benefits to cloud providers. Papers \cite{zaman2011combinatorial,zaman2013combinatorial,nejad2015truthful,zhang2018truthful} have presented effective mechanisms for dynamic virtual machine resource allocation. Unfortunately, none of these studies considered the issue of security.

Recently, a few studies on secure cloud resource auction have been proposed. Chen {\itshape et al.} \cite{chen2016privacy} proposed a privacy-preserving cloud resource auction protocol, and used garbled circuits to protect the bid privacy in the process of the auction. Cheng {\itshape et al.} \cite{cheng2019towards} proposed PDAM, a privacy-preserving double auction mechanism for cloud markets based on secret sharing and garbled circuits. Xu {\itshape et al.} \cite{xu2019privacy} designed a privacy-preserving double auction mechanism based on homomorphic encryption and sorting networks. A related research topic that we need to mention is secure spectrum auction. Literature \cite{chen2014ps,chen2015itsec,chen2016towards,chen2017secure} proposed a number of secure spectrum auction mechanisms for a variety of auction scenarios, by applying garbled circuits, secret sharing, homomorphic encryption, and other techniques.
These secure auction mechanisms considered only security in the presence of semi-honest adversaries. Instead, our secure computation framework achieves security against malicious adversaries in this context for the first time.

There is also a line of work on auctions with differential privacy, such as \cite{zhu2014differentially,xu2017pads,chen2019differentially}. However, differential privacy aims to protect privacy from analyzing published output by adding noise to the original output, which is quite different from secure multiparty computation fundamentally.


\section{Technical Preliminaries}\label{sec:preliminaries}
In this section, we introduce some technical preliminaries for our framework.

\subsection{Garbled Circuits}
Garbled circuits are a cryptographic tool for secure multiparty computation \cite{yao1986generate,lindell2009proof}. During the process of garbled circuit computation, two parties hold their own private inputs respectively. Specifically, one holds its private input ${x_1}$, and the other holds its private input ${x_2}$. They want to compute a function $f$ over their inputs securely, without revealing any information about one party's private input to the other, except what can be inferred from the output. Then one party (called the garbler) prepares a garbled circuit to evaluate $f\left( {{x_1},{x_2}} \right)$. The garbler transmits the garbled circuit, garbled input ${x_1}$, and the output decoding table to the other party (called the evaluator). The evaluator obtains its garbled input ${x_2}$ obliviously by oblivious transfer, then computes the garbled output with both parties's garbled inputs, and then decodes the plain output (i.e., $f\left( {{x_1},{x_2}} \right)$) by the output decoding table.

\begin{figure}[ht]
\centering
\includegraphics[width=0.35\textwidth]{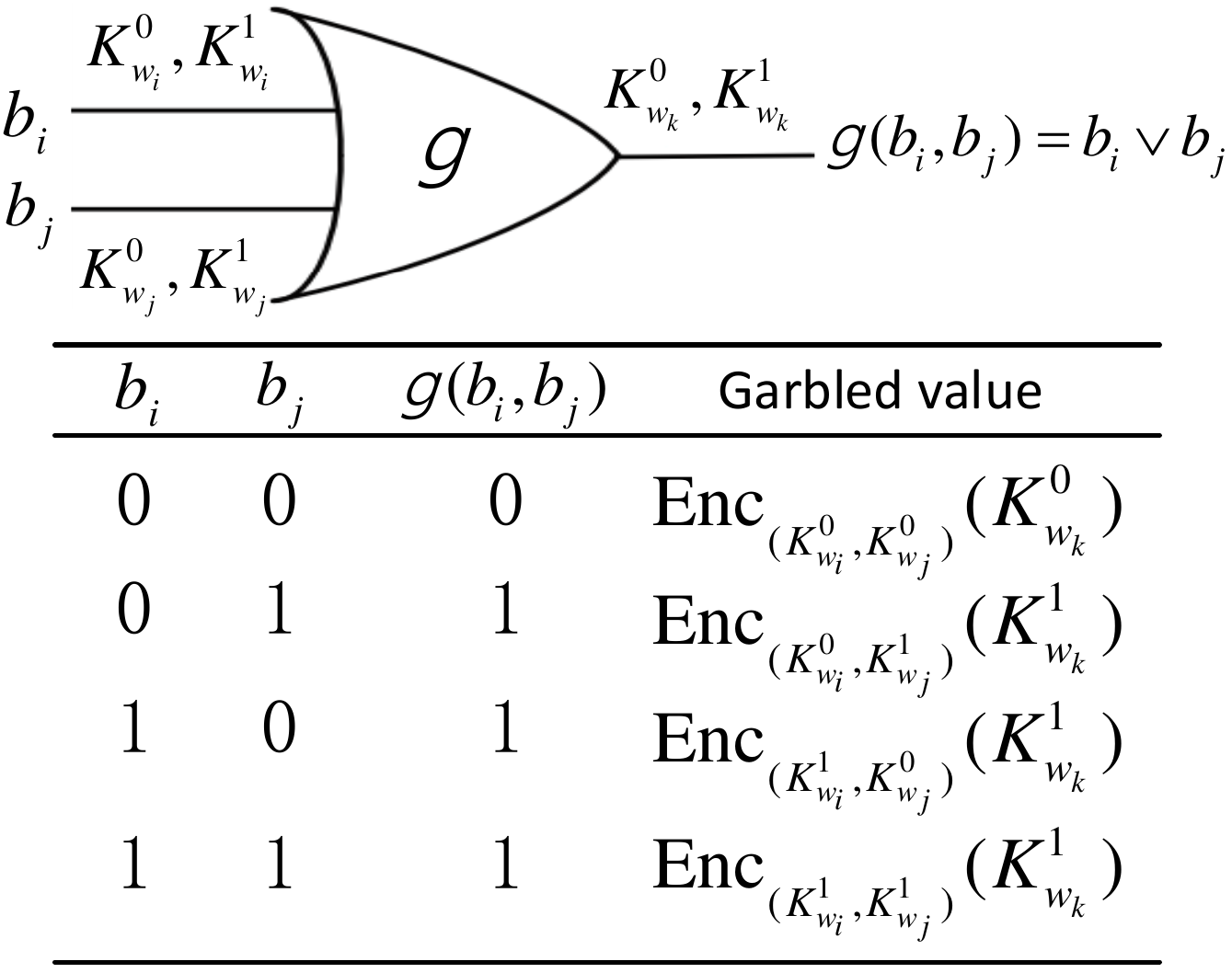}
\caption{Illustration of a garbled OR gate}
\label{fig:garbledcircuit}
\end{figure}

A garbled circuit can be constructed as follows. First, the garbler converts the computed function $f$ to a Boolean circuit containing only, without loss of generality, binary gates. Then, the garbler associates two random labels $K_{{w}}^0$ and $K_{{w}}^1$ to each wire ${w}$ of the circuit, encoding bit $0$ and bit $1$, respectively, and for each binary gate $g$ of the circuit with input wires ${w_i}$ and ${w_j}$, and output wire ${w_k}$, computes ciphertexts
\[{\rm{En}}{{\rm{c}}_{(K_{{w_i}}^{{b_i}},K_{{w_j}}^{{b_j}})}}(K_{{w_k}}^{g({b_i},{b_j})})\]
for $\ {b_i},{b_j} \in \{ 0,1\}$. A garbled gate is composed of above four ciphertexts. Fig.~\ref{fig:garbledcircuit} illustrates a garbled OR gate. Here, $Enc$ denotes the symmetric encryption algorithm used for garbled circuits. Finally, all the garbled binary gates constitute the garbled circuit of function $f$.

Given input labels $K_{{w_i}}^{{b_i}},K_{{w_j}}^{{b_j}}$, the evaluator can compute a garbled gate by recovering its corresponding output label $K_{{w_k}}^{g({b_i},{b_j})}$. Then, a garbled circuit which consists of a number of garbled gates can be computed by gate by gate.

In the process of a standard garbled circuit computation, the evaluator needs to run a 1-out-of-2 oblivious transfer protocol with the garbler \cite{rabin2005exchange} to get its input labels. In this paper, the data providers generate their input encodings and labels, and send input encodings to the garbler and input labels to the evaluator, thus eliminating the use of oblivious transfers.

\subsection{Cut-and-choose}
The technique of cut-and-choose based on garbled circuits \cite{lindell2015efficient} is used to construct a secure two-party computation protocol in the presence of malicious adversaries. It works as follows. The garbler prepares multiple copies of garbled circuits and sends the commitments of these circuits to the evaluator. Then, the evaluator selects randomly a set of circuits (called check circuits) and asks the garbler to open the commitments of them. After the garbler opens these commitments, the evaluator checks whether these circuits are correctly constructed. If all the check circuits are correct, the evaluator evaluates the circuits whose commitments have not been opened (called evaluation circuits), and takes most of the results of evaluation circuits as the final output. Otherwise, the evaluator will terminate the protocol. The method of cut-and-choose forces the garbler to construct a garbled circuit exactly. If a malicious garbler constructs the circuit by mistake, the evaluator will detect it with a high probability.

In our paper, we use the idea of cut-and-choose to design an input consistency check. This check ensures that data providers submit the same input for the two independent garbled circuits, with a proof indicating the data providers who is cheating.


\subsection{Problem Statement}
In our secure computation framework illustrated by Fig.~\ref{fig:framework}, there are $n$ data providers $(n \ge 1)$, and each data provider $u$ holds its private input ${x_u} \in \{0,1\}^l$, for $u \in [1..n]$. All the data providers want to compute a function over their inputs $f(x_1,x_2, \cdots , x_n)=\{y_u\}_{u \in [1..n]}$, where $y_u \in \{0,1\}^{l'}$ is data provider $u$'s corresponding output, and $l'$ is the bit length of each output. The data providers submit their inputs to two non-colluding computation parties $P_1$ and $P_2$, which cooperate to compute the function $f$ securely, and send each data provider its output.

The security goal is that all data providers want to compute the function $f$ correctly, and obtain their respective output without leaking anything about any data provider's input to others (including other data providers, $P_1$ and $P_2$). The threat model is that either $P_1$ or $P_2$, or a subset of data providers (in extreme case all the data providers) may be corrupted maliciously. We aim to design a protocol that securely computes the function $f$ in this security model.

\section{Our Framework}\label{sec:protocolframework}
In this section, we present our secure computation framework for multiple data providers, and analyze its security in the presence of malicious adversaries. Table~\ref{tab:notations} describes the notation we use in our protocol.

\begin{table}[tbp]\small
  \centering
  \caption{\small Notation for our framework}
  \label{tab:notations}
  \begin{tabular}{lm{0.35\textwidth}}
    \toprule
    Notation&Descriptions\\
    \midrule
    $f$ & The function computed.\\
    \midrule
    $x_u, y_u$ & Input and output of the data provider $u$.\\
    \midrule
    $n$ & The number of data providers.\\
    \midrule
    $s$ & The number of pairs of comment sets.\\
    \midrule
    $l, l'$ & Bit length of each input and output.\\
    \midrule
    $\rho$ & The challenge string for cut-and-choose.\\
    \midrule
    $W_{u,i,j}$ & First set of $j$-th pair of commitment sets on $u$'s $i$-th input wire.\\
    \midrule
    ${W'}_{u,i,j}$ & Second set of $j$-th pair of commitment sets on $u$'s $i$-th input wire.\\
    \midrule
    $K_{u,i,j}^{k,b}$ & Label assigned to $b$ in $j$-th pair of commitment sets on $u$'s $i$-th input wire for $k$-th garbled circuit.\\
    \midrule
    $K_{u,i}^{k,b}$ & Final label assigned to $b$ on $u$'s $i$-th input wire for $k$-th garbled circuit.\\
    \midrule
    $\overline{K}_{u,i}^{k,b}$ & Label assigned to $b$ on $u$'s $i$-th output wire for $k$-th garbled circuit.\\
    \midrule
    $H_{u,i}^{k,b}$ & Final hash value assigned to $b$ on $u$'s $i$-th input wire for $k$-th garbled circuit.\\
    \midrule
    $E_k$ & The output encodings of $k$-th garbled circuit.\\
    \midrule
    $O_k$ & The output labels of $k$-th garbled circuit.\\
    \midrule
    $E_{k,u}$ & The output encodings of $u$ of $k$-th garbled circuit.\\
    \midrule
    $O_{k,u}$ & The output labels of $u$ of $k$-th garbled circuit.\\
  \bottomrule
\end{tabular}
\end{table}

\subsection{Design Rationale}

In our framework, the data providers submit their inputs in proper form to two non-colluding computation parties $P_1$ and $P_2$, who run two independent garbled circuits of the same computation, with the roles of $P_1$ and $P_2$ swapped. The security against malicious adversaries is achieved by considering together the three phases, namely input consistency check, garbled circuit computation, and output verification.

\emph{Input consistency check.} The input consistency check uses the cut-and-choose paradigm. It ensures that each data provider provides labels representing the same values for the two independent garbled circuit computations. To prepare its input submission, each data provider $u$ first generates $s$ independent copies of encodings and labels corresponding to $u$'s input bits, where each copy contains encodings and labels for two independent garbled circuits. Here, by an encoding we mean a pair of $0$- and $1$- labels. Note also that the way of input eliminates the uses of oblivious transfer, and thus can improve the running efficiency. Next, each data provider $u$ commits to each copy of encodings and labels, and sends the commitments to both $P_1$ and $P_2$. Then, both parties randomly open some of the $s$ commitments of copies (i.e., check sets) to check if the commitments are well constructed, and each party partially opens the remaining commitments (i.e., evaluation sets) to get its corresponding encodings and labels, which are XORed together for two garbled circuits computations, and both parties check if these encodings and labels are consistent cooperatively. Through the consistency checks for both check and evaluation sets, our design ensures that the data provider $u$ can only cheat when it provides consistent labels in all check sets, and inconsistent labels in all evaluation sets, resulting in a failure probability $2^{-s+1}$.

\emph{Garbled circuit computation.} In this phase, two independent garbled circuits are computed without inputs and outputs learnt by the two computation parties. The first one is generated by $P_1$ and evaluated by $P_2$, while the second is generated by $P_2$ and evaluated by $P_1$. In the computation, an evaluator only learns input labels, while a garbler learns only input encodings. Both parties learn nothing about the plain inputs as long as they do not collude with each other. Similarly, when the output labels are obtained, an evaluator only learns output labels, while a garbler learns only output encodings. The two parties learn nothing about the plain output. Additionally, the standard garbled circuit techniques already offer protection against a malicious evaluator. Therefore, the run of two independent garbled circuits, with both parties learning neither inputs nor outputs, enables the malicious security, since in this case generating and evaluating an incorrect circuit poses no security risks on the data providers.

\emph{Output verification.} All the output labels and encodings are committed and published by both parties for verification. The commitments to labels and encodings corresponding to the output of each data provider $u$ are then opened to $u$, who decodes the labels with encodings to get two versions of plain outputs. Each data provider $u$ checks if the two versions of outputs are equal. If the check fails, $u$ provides a proof by revealing the openings for its output commitments. If there is no failure proof, each data provider $u$ accepts the output. Note that both computation parties do not know the result of output verification, and an attack by generating and evaluating an incorrect circuit cannot succeed.

\subsection{Design Details}
Protocol~\ref{protocol:framwork} describes the overall framework. Specifically, our protocol is divided into three phases: input consistency check, garbled circuit computation and output verification. The details of each phase are described as follows.

\floatname{algorithm}{Protocol}
\renewcommand{\algorithmicrequire}{\bfseries Input:}   
\renewcommand{\algorithmicensure}{\bfseries Output:}  

\begin{algorithm}[htb]
\caption{Secure computation framework for multiple data providers against malicious adversaries} \label{protocol:framwork}
\begin{algorithmic}[1]
\REQUIRE Each data provider $u$ hold its input $x_u$.
\ENSURE $P_1$ and $P_2$ get $\bot$, and each data provider $u$ gets $y_u$ if output verification succeeds, $\bot$ otherwise.

{\bfseries Phase 1: Input Consistency Check}\\
\underline{Each data provider $u$:}

\STATE Computes $s$ pairs of commitment sets $\left\{ {{W_{u,i,j}},{W_{u,i,j}'}} \right\}_{j = 1}^s$ by Eqs.~\eqref{equ:W} and \eqref{equ:Wprime}, for the $i$-th input wire.

\STATE Computes a position set $\{ com({b_{u,i,j}})\}_{j=1}^s$, ${b_{u,i,j}} \in \{ 0,1\} $, in term of $x_u$.

\STATE Sends all above commitments to parties ${P_1}$ and ${P_2}$.

\underline{Parties ${P_1}$ and ${P_2}$:}

For each input wire $i$ of each $u$, do Lines~\ref{line:ccc} to \ref{line:evaluate}.




\STATE Do \emph{commitment construction check}: randomly open and check some pairs of commitment sets.\label{line:ccc}

\STATE Do \emph{label consistency check}: check the consistency of final labels in evaluation sets, and give a consistency proof.\label{line:lcc}

\STATE Evaluate the final labels by Eq.~\eqref{equ:Kkq} for the two garbled circuits, respectively.\label{line:evaluate}


{\bfseries Phase 2: Garbled circuit computation}\\

\STATE $P_1$ generates garbled circuit $GC_1$, $P_2$ generates $GC_2$, and they send their respective circuits to each other.

\STATE $P_1$ evaluates $GC_2$ to get output labels $O_2$ while $P_2$ holds output encodings $E_2$, and $P_2$ evaluates $GC_1$ to get $O_1$, while $P_1$ holds $E_1$.


{\bfseries Phase 3: Output verification}\\

\STATE $P_1$ and $P_2$ commit to output encodings and labels by Eq.~\eqref{equ:com-e1o2} and \eqref{equ:com-e2o1}, and publish all commitments.

\STATE The commitments, $\{com(E_{k,u}), com(O_{k,u})\}_{k=1}^2$, related to each data provider $u$'s output is opened to $u$.

\STATE Each data provider $u$ computes two versions of outputs from $\{E_{k,u}, O_{k,u}\}_{k=1}^2$, and accepts the output if both versions of outputs are equal for all $u$.


\end{algorithmic}
\end{algorithm}

\subsubsection{\bfseries Phase 1: Input Consistency Check}
To keep the consistency of the inputs, we design a mechanism which combines a commitment scheme and the idea of cut-and-choose at this phase. The data providers submit the labels of their inputs twice, then the computation parties ${P_1}$ and ${P_2}$ verify the consistency of two submissions. This phase consists of the following three steps.

{\bfseries Step 1: Input commitment.}

Suppose the bit length of all inputs is $l$. Each data provider $u$ generates $s$ pairs of commitment sets for each of its input wires. Specifically, for the $i$-th ($i \in \{ 1,l\} $) bit of its input, data provider $u$ generates $s$ pairs of commitment sets $\left\{ {{W_{u,i,j}},{W_{u,i,j}'}} \right\}_{j = 1}^s$, and $W_{u,i,j}$, $W'_{u,i,j}$ are defined as follows.
\begin{equation}\label{equ:W}
\begin{aligned}
{W_{u,i,j}} = \{ &com(K_{u,i,j}^{1,0}\| K_{u,i,j}^{1,1}\| K_{u,i,j}^{2,b}),\\
&com(K_{u,i,j}^{2,0}\| K_{u,i,j}^{2,1}\| K_{u,i,j}^{1,b})\}
\end{aligned}
\end{equation}
\begin{equation}\label{equ:Wprime}
\begin{aligned}
{W_{u,i,j}'} = \{ &com(K_{u,i,j}^{1,0}\| K_{u,i,j}^{1,1}\| K_{u,i,j}^{2,1 - b}),\\
&com(K_{u,i,j}^{2,0}\| K_{u,i,j}^{2,1}\| K_{u,i,j}^{1,1 - b})\}
\end{aligned}
\end{equation}
Where $com$ refers to a perfectly binding commitment scheme \cite{goldreich2007foundations}, and symbol $\parallel$ denotes connection of strings, and $b \in \{ 0,1\} $ is chosen randomly and independently for each $u$, $i$, $j$.

The commitment set $W_{u,i,j}$ (resp. $W'_{u,i,j}$) contains two commitments corresponding to the case that data provider $u$'s $i$-th input wire is assigned to bit value $b$ (resp. $1-b$). In the first commitment, ($K_{u,i,j}^{1,0}$, $K_{u,i,j}^{1,1}$) represents the $j$-th encoding of $u$'s $i$-th input wire in the first garbled circuit, and $K_{u,i,j}^{2,b}$ (resp. $K_{u,i,j}^{2,1-b}$) represents the $j$-th label of $u$'s $i$-th input wire in the second garbled circuit. Similarly, in the second commitment, ($K_{u,i,j}^{2,0}$, $K_{u,i,j}^{2,1}$) represents the $j$-th encoding of $u$'s $i$-th input wire in the second garbled circuit, and $K_{u,i,j}^{1,b}$ (resp. $K_{u,i,j}^{1,1-b}$) represents the $j$-th label of $u$'s $i$-th input wire in the first garbled circuit. The two commitments will be opened to parties $P_1$ and $P_2$, respectively, enabling them to compute two independent garbled circuits with party roles exchanged. The two commitments guarantee that the $j$-th labels of $u$'s $i$-th input wire sent to both parties are consistent (i.e., representing the same bit value, either $b$ or $1-b$), and the randomness of $b$ ensures that the bit value is not leaked even when both $W_{u,i,j}$ and $W'_{u,i,j}$ are opened for check, or either of them is opened for evaluation.

Moreover, each data provider $u$ commits to a position indicating which commitment set in each pair it chooses as input commitment set according to its input bit. These input commitment sets will be opened for evaluation later. The commitments to positions constitute a position set, denoted by $\{ com({b_{u,i,j}})\}_{j=1}^s$, where ${b_{u,i,j}} \in \{ 0,1\}$, and the $j$-th input commitment set of $u$'s $i$-th input wire is indicated to be ${W_{u,i,j}}$ if ${b_{u,i,j}}=0$, and ${W'_{u,i,j}}$ otherwise. For example, if the position set of $u$'s first input wire is $\{ com(0),com(1), \cdots ,com(0)\}$, it means that the corresponding input commitment sets are ${W_{u,1,1}},{W_{u,1,2}'}, \ldots ,{W_{u,1,s}}$, respectively. Fig.~\ref{fig:commitSets} illustrates above constructions.


\begin{figure}[ht]
\centering
\includegraphics[width=0.48\textwidth]{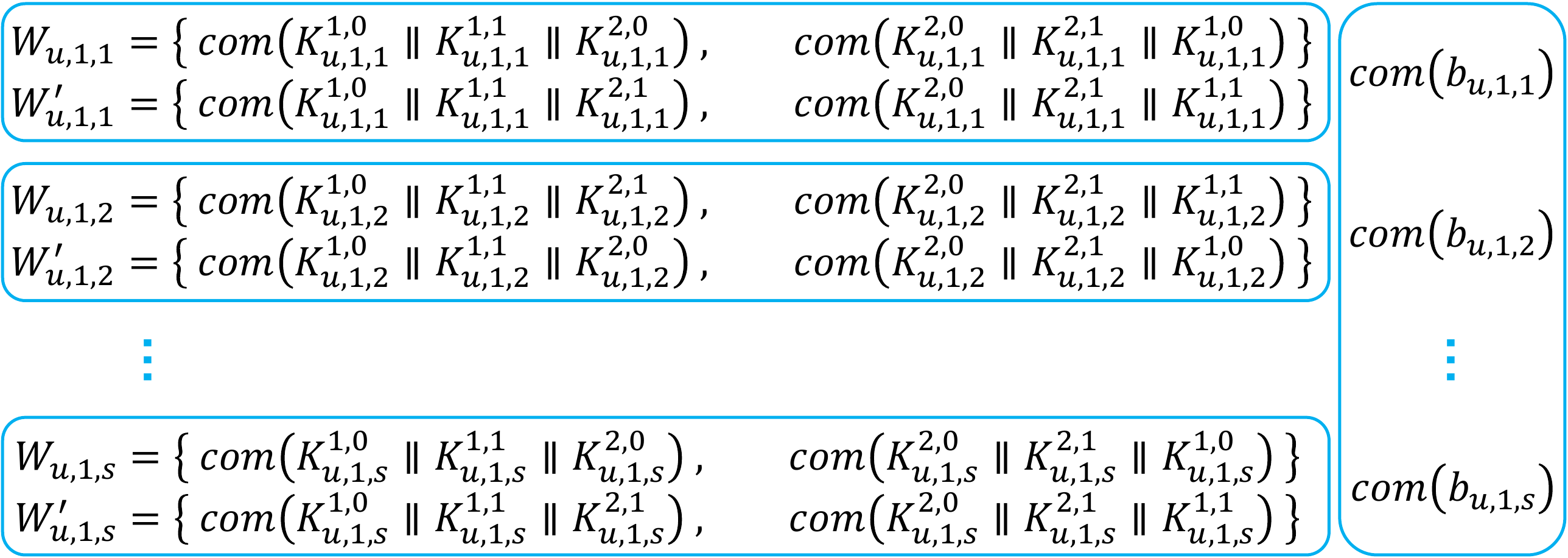}
\caption{The commitment sets corresponding to the first input wire of the data provider $u$}
\label{fig:commitSets}
\end{figure}



Each data provider $u$ sends all the commitment sets constructed above on each of its input wires to both computation parties ${P_1}$ and ${P_2}$. If the input length is $l$, there are $ls$ pairs of commitment sets and a total of $ls \times (4 + 1) = O(ls)$ commitments to send.

{\bfseries Step 2: Consistency check.}

In this step, we first check if the input commitments are properly constructed by opening and checking some of the commitments, and then check if the remaining commitments satisfy label consistency for garbled circuit evaluation.

\emph{Commitment construction check.} To determine which pair of commitment sets are used for opening and checking, a challenge string is prepared using a coin-tossing protocol as follows for each input wire $i$ of each data provider $u$: 1) Party ${P_1}$ selects a random string ${\rho _1} \in {\{ 0,1\} ^s}$; 2) Party ${P_2}$ selects a random string ${\rho _2} \in {\{ 0,1\} ^s}$; 3) Both parties exchange random strings, and compute the final challenge string by $\rho  = {\rho _1} \oplus {\rho _2}$ . If the $j$-th bit of $\rho $ is 1, the pair of commitment sets $\{ {W_{u,i,j}},{W_{u,i,j}'}\} $ will be opened and checked, and we call opened pairs of commitment sets check sets. Similarly, if the $j$-th bit of $\rho $ is 0, the pair of commitment sets $\{ {W_{u,i,j}},{W_{u,i,j}'}\} $ will be used to evaluate the garbled input bit, and we call these pairs evaluation sets.

Each data provider $u$ opens the commitments in all check sets of each input wire $i$ to both ${P_1}$ and ${P_2}$ for checking. It also opens the commitments, corresponding to evaluation sets, in the position set of each input wire $i$ to both parties. For any input wire $i$, any $j$-th pair of commitment sets that is check set, both parties check if the two commitments in $W_{u,i,j}$ (resp. $W'_{u,i,j}$) are well constructed. Specifically, suppose that the two commitments of $W_{u,i,j}$ or $W'_{u,i,j}$ are opened as two triples $(K_{1,1}, K_{1,2}, K_{1,3})$ and $(K_{2,1}, K_{2,2}, K_{2,3})$, respectively. Parties $P_1$ and $P_2$ check if the combination of the first two values in each triple, namely, $(K_{1,1}, K_{1,2})$ and $(K_{2,1}, K_{2,2})$, form encodings (i.e., $K_{1,1} \ne K_{1,2}$ and $K_{2,1} \ne K_{2,2}$), and if the the third value of one triple is the label representing the same bit value of the encoding of the other triple (i.e., $K_{1,1}=K_{2,3}$ and $K_{2,1}=K_{1,3}$, or $K_{1,2}=K_{2,3}$ and $K_{2,2}=K_{1,3}$). If any of the checks fails, both parties output ``Bad Input'' and abort the protocol.

Fig.~\ref{fig:commitSetsopen} demonstrates the opening and checking process of $u$'s first input wire. First, the challenge string generated determines which pairs of commitment sets are check sets (in green solid circles) and which are evaluation sets (without green solid circles). Second, all commitment sets in check sets are opened, and in the position set, all commitments responding to evaluation sets are opened such that the input commitment sets (in red hollow circles) are determined. Finally, both $P_1$ and $P_2$ check if all the opened check sets are well constructed.



\begin{figure}[ht]
\centering
\includegraphics[width=0.48\textwidth]{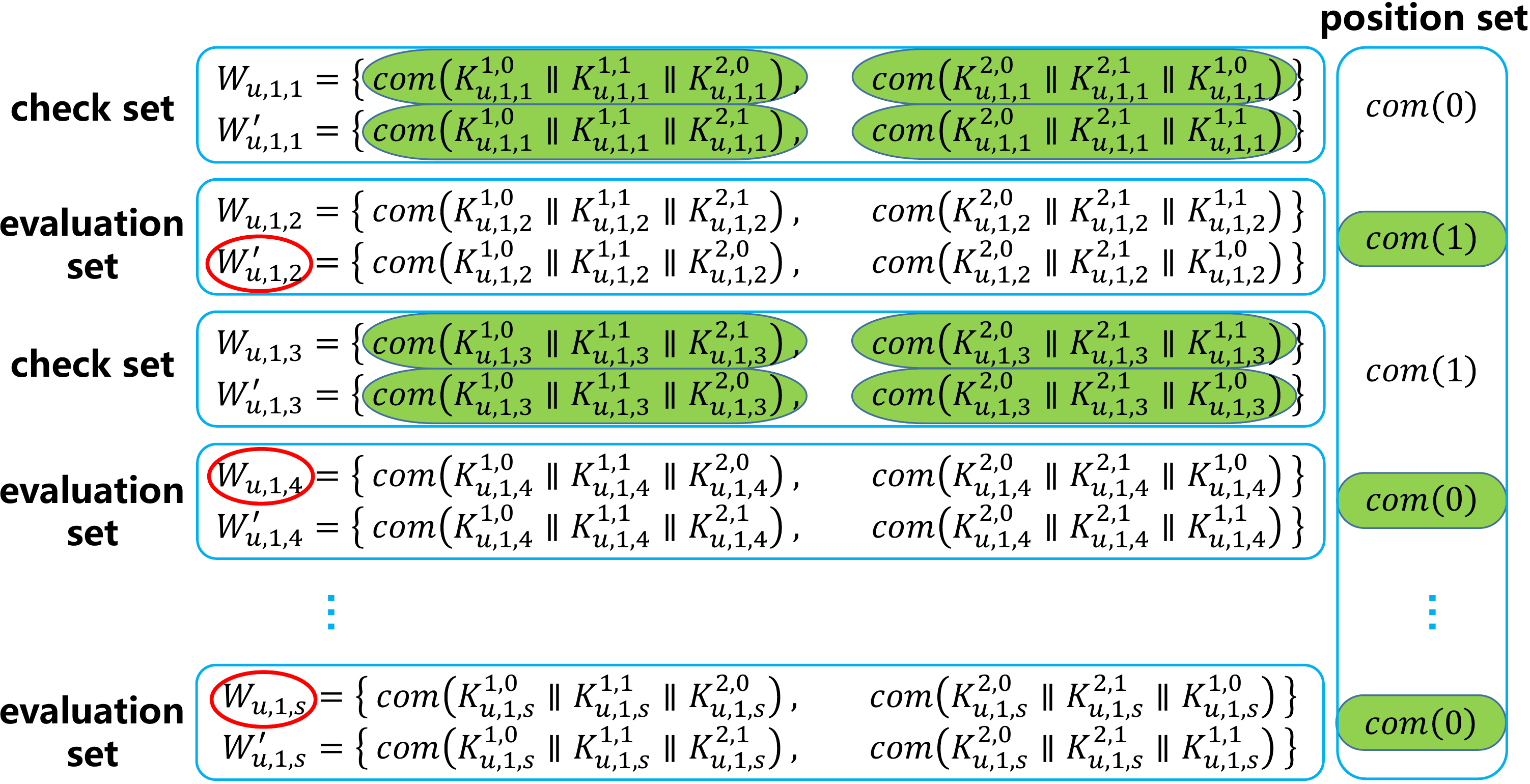}
\caption{\small An example of the commitment construction check for data provider $u$'s first input wire. The challenge string is $\rho =\left \langle1010 \cdots 0 \right \rangle $. The values of $b$'s are $ 0,1,1,0,\cdots,0$, and the position indicators committed are also $0,1,1,0,\cdots,0$, which implies the bit value of the input wire is $0$.}
\label{fig:commitSetsopen}
\end{figure}

\emph{Label consistency check.} The commitment construction check does not suffice for input consistency check. It is possible that the the commitment sets in the evaluation sets are not well constructed, or that a data provider does not commit to a correct position set. Thus, we design a label consistency check for labels committed by the input commitment sets.

Suppose the index set of evaluation sets for data provider $u$'s input wire $i$ is $\mathcal{J} \subset [1..s]$. Denote the input commitment set by $W^*_{u,i,j}$ , and the corresponding triples committed by $\{(K_j^{1,1}, K_j^{1,2}, K_j^{1,3}),$ $(K_j^{2,1}, K_j^{2,2}, K_j^{2,3})\}$, for $j \in \mathcal{J}$. Let $H(.)$ be a collision-resistant hash function. To start evaluating, the first commitments of $\{W^*_{u,i,j}\}_{j \in \mathcal{J}}$ are opened to party $P_1$, and the second ones are opened to party $P_2$. As a result, party $P_1$ gets triples $(K_j^{1,1}, K_j^{1,2}, K_j^{1,3})_{j \in \mathcal{J}}$, and $P_2$ gets triples $(K_j^{2,1}, K_j^{2,2}, K_j^{2,3})_{j \in \mathcal{J}}$, respectively. Both $P_1$ and $P_2$ check the consistency of these triples as follows.
\begin{itemize}
\item[1)] For $q \in \{1,2,3\}$, $P_1$ computes
\begin{equation}
H^{1,q} = \oplus_{j \in \mathcal{J}} H(K_j^{1,q})
\end{equation}
\begin{equation}
C^{1,q} = com(\|_{j \in \mathcal{J}} H(K_j^{1,q}))
\end{equation}

\item[2)] For $q \in \{1,2,3\}$, $P_2$ computes
\begin{equation}
H^{2,q} = \oplus_{j \in \mathcal{J}} H(K_j^{2,q})
\end{equation}
\begin{equation}
C^{2,q} = com(\|_{j \in \mathcal{J}} H(K_j^{2,q}))
\end{equation}

\item[3)] $P_1$ randomly reorders tuple $(H^{1,1}, H^{1,2})$ to get a new tuple $(\widetilde{H}^{1,1}, \widetilde{H}^{1,2})$ and the corresponding commitment tuple $(\widetilde{C}^{1,1}, \widetilde{C}^{1,2})$, which are sent to $P_2$.

\item[4)] $P_2$ randomly reorders tuple $(H^{2,1}, H^{2,2})$ to get a new tuple $(\widetilde{H}^{2,1}, \widetilde{H}^{2,2})$ and the corresponding commitment tuple $(\widetilde{C}^{2,1}, \widetilde{C}^{2,2})$, which are sent to $P_1$.

\item[5)] $P_1$ checks if $H^{1,3} \in \{\widetilde{H}^{2,1}, \widetilde{H}^{2,2}\}$. If the check fails, $P_1$ outputs a message ``Bad Input'', and a proof consisting of the data provider $u$, the input wire $i$, triples $(\widetilde{H}^{2,1}, \widetilde{H}^{2,2}, H^{1,3})$ and $(\widetilde{C}^{2,1}, \widetilde{C}^{2,2}, C^{1,3})$, and then abort the protocol.
\item[6)] Similarly, $P_2$ checks if $H^{2,3} \in \{\widetilde{H}^{1,1}, \widetilde{H}^{1,2}\}$. If the check fails, $P_2$ acts in the same way except outputting a proof with triples $(\widetilde{H}^{1,1}, \widetilde{H}^{1,2}, H^{2,3})$ and $(\widetilde{C}^{1,1}, \widetilde{C}^{1,2}, C^{2,3})$.
\end{itemize}

If the above label consistency check fails, all the data providers are aware of which input wire of which data provider does not satisfy the consistency. They can also verify this result by examining the corresponding proof. Suppose a data provider $u_0$ received a proof indicating $u$'s input wire $i$ fails to satisfy consistency, with triples $(\widetilde{H}^{1,1}, \widetilde{H}^{1,2}, H^{2,3})$ and $(\widetilde{C}^{1,1}, \widetilde{C}^{1,2}, C^{2,3})$. The result can be verified as follows.
\begin{itemize}
\item[1)] $u_0$ requests $P_1$ to open commitments $\widetilde{C}^{1,1}$, $\widetilde{C}^{1,2}$, and obtains $\{H(\widetilde{K}_j^{1,1}),H(\widetilde{K}_j^{1,2})\}_{j \in \mathcal{J}}$; requests $P_2$ to open commitment $C^{2,3}$, and gets $\{H(K_j^{2,3})\}_{j \in \mathcal{J}}$.
\item[2)] $u_0$ checks if the following three equations hold at the same time.
\begin{equation}
\left\{
\begin{aligned}
\widetilde{H}^{1,1} =& \oplus_{j \in \mathcal{J}} H(\widetilde{K}_j^{1,1})\\
\widetilde{H}^{1,2} =& \oplus_{j \in \mathcal{J}} H(\widetilde{K}_j^{1,2})\\
H^{2,3} =& \oplus_{j \in \mathcal{J}} H(K_j^{2,3})
\end{aligned}
\right.
\end{equation}
If the check fails, $u_0$ concludes that the parties are cheating.
\item[3)] $u_0$ checks if $H(K_j^{2,3}) = H(\widetilde{K}_j^{1,1})$ for all $j \in \mathcal{J}$, or $H(K_j^{2,3}) = H(\widetilde{K}_j^{1,2})$ for all $j \in \mathcal{J}$. If this check fails, the label inconsistency verified.
\end{itemize}



Fig.~\ref{fig:check} illustrates how to check label consistency for data provider $u$'s first input wire in the example of Fig.~\ref{fig:commitSetsopen}. All the opened labels in the same position of input commitment sets of the wire input are hashed and XORed to get the final hash values, $H_{u,1}^{1,0}$, $H_{u,1}^{1,1}$, $H_{u,1}^{2,0}$ held by $P_1$, and $H_{u,1}^{2,0}$, $H_{u,1}^{2,1}$, $H_{u,1}^{1,0}$ held by $P_2$. $P_1$ randomly reorders pair ($H_{u,1}^{1,0}$, $H_{u,1}^{1,1}$) to get ($\widetilde{H}_{u,1}^{1,0}$, $\widetilde{H}_{u,1}^{1,1}$), sends the latter to $P_2$, who checks if $H_{u,1}^{1,0}$ is contained in $\{\widetilde{H}_{u,1}^{1,0}, \widetilde{H}_{u,1}^{1,1}\}$. Symmetrically, $P_2$ randomly reorders pair ($H_{u,1}^{2,0}$, $H_{u,1}^{2,1}$) to get ($\widetilde{H}_{u,1}^{2,0}$, $\widetilde{H}_{u,1}^{2,1}$), sends the latter to $P_1$, who checks if $H_{u,1}^{2,0}$ is contained in $\{\widetilde{H}_{u,1}^{2,0}, \widetilde{H}_{u,1}^{2,1}\}$. If both checks succeed, then the label consistency check succeeds.

\begin{figure}[ht]
\centering
\includegraphics[width=0.48\textwidth]{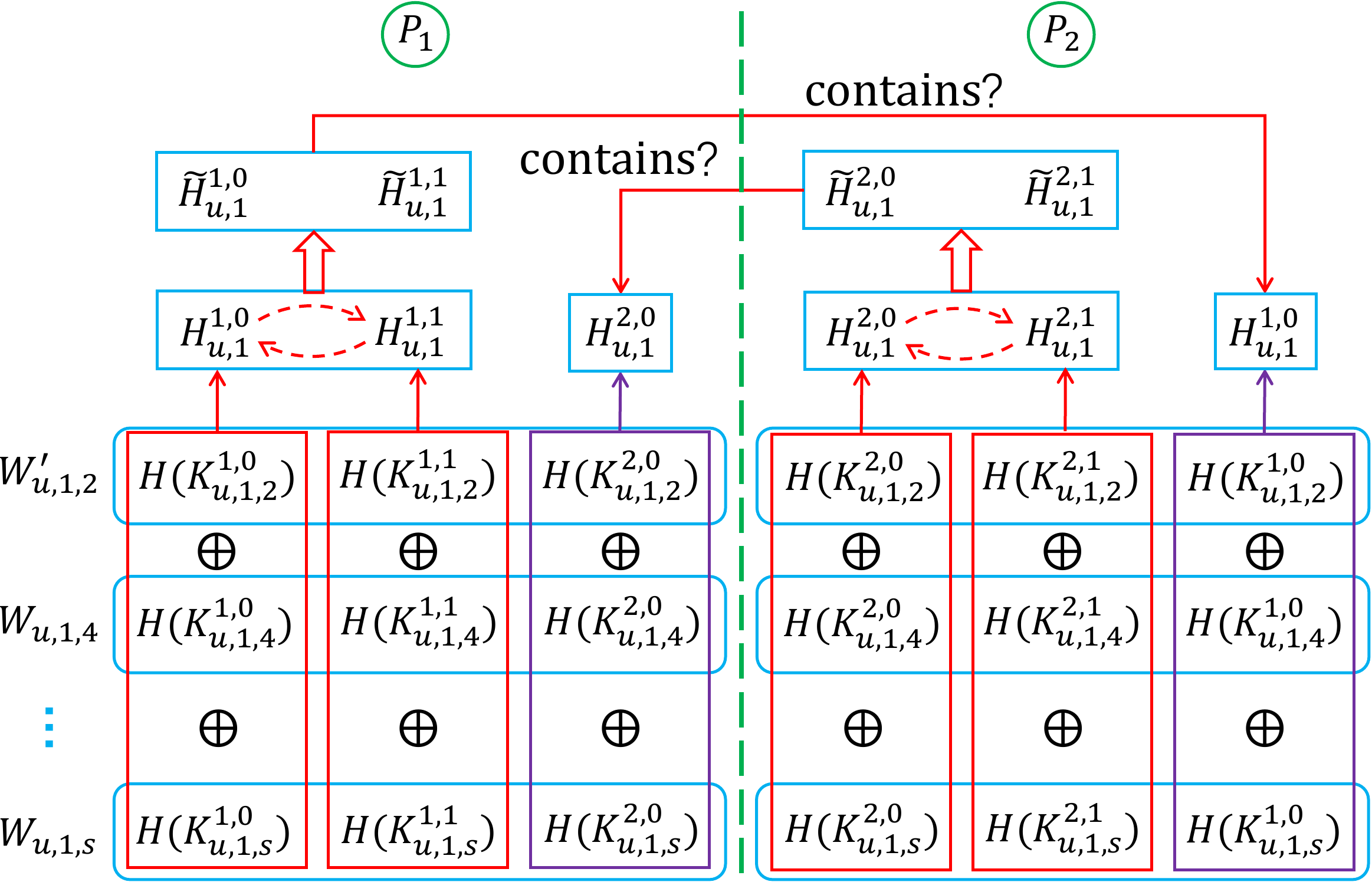}
\caption{The label consistency check for data provider $u$'s first input wire. }
\label{fig:check}
\end{figure}



{\bfseries Step 3: Label evaluation.}

Once the consistency check succeeds, parties $P_1$ and $P_2$ proceed to compute the final labels for each input wire of each data provider. Just use the same denotation as that of label consistency check. We take as an example data provider $u$'s input wire $i$, whose index set of evaluation sets is $\mathcal{J} \subset [1..s]$. The input commitment set $W^*_{u,i,j}$ commits to triples $\{(K_j^{1,1}, K_j^{1,2}, K_j^{1,3}),$ $(K_j^{2,1}, K_j^{2,2}, K_j^{2,3})\}$, for all $j \in \mathcal{J}$. Then, party $P_k$, for $k \in \{1,2\}$, computes its the final labels as follows.
\begin{equation}\label{equ:Kkq}
K^{k,q} = \oplus_{j \in \mathcal{J}} K_j^{k,q}
\end{equation}
for $q \in \{1,2,3\}$. Finally, concerning data provider $u$'s input wire $i$, party $P_1$ uses $(K^{1,1}, K^{1,2})$ as the encoding for the first garbled circuit, and uses $K^{1,3}$ as the label for the second garbled circuit; while party $P_2$ uses $(K^{2,1}, K^{2,2})$ as the encoding for the second garbled circuit, and uses $K^{2,3}$ as the label for the first garbled circuit.

Fig.~\ref{fig:commitSetsopen2} shows how to compute the final labels for data provider $u$'s first input wire in the example of Fig.~\ref{fig:commitSetsopen}. All the opened labels in the same position of the input wire's input commitment sets are XORed to get final labels. In the end, $P_1$ holds ($K_{u,1}^{1,0}$, $K_{u,1}^{1,1}$) as the encoding for the first garbled circuit, and $K_{u,1}^{2,0}$ as the label for the second garbled circuit. Symmetrically, $P_2$ holds ($K_{u,1}^{2,0}$,$K_{u,1}^{2,1}$) as the encoding for the second garbled circuit, and $K_{u,1}^{1,0}$ as the label for the first garbled circuit.

\begin{figure}[ht]
\centering
\includegraphics[width=0.48\textwidth]{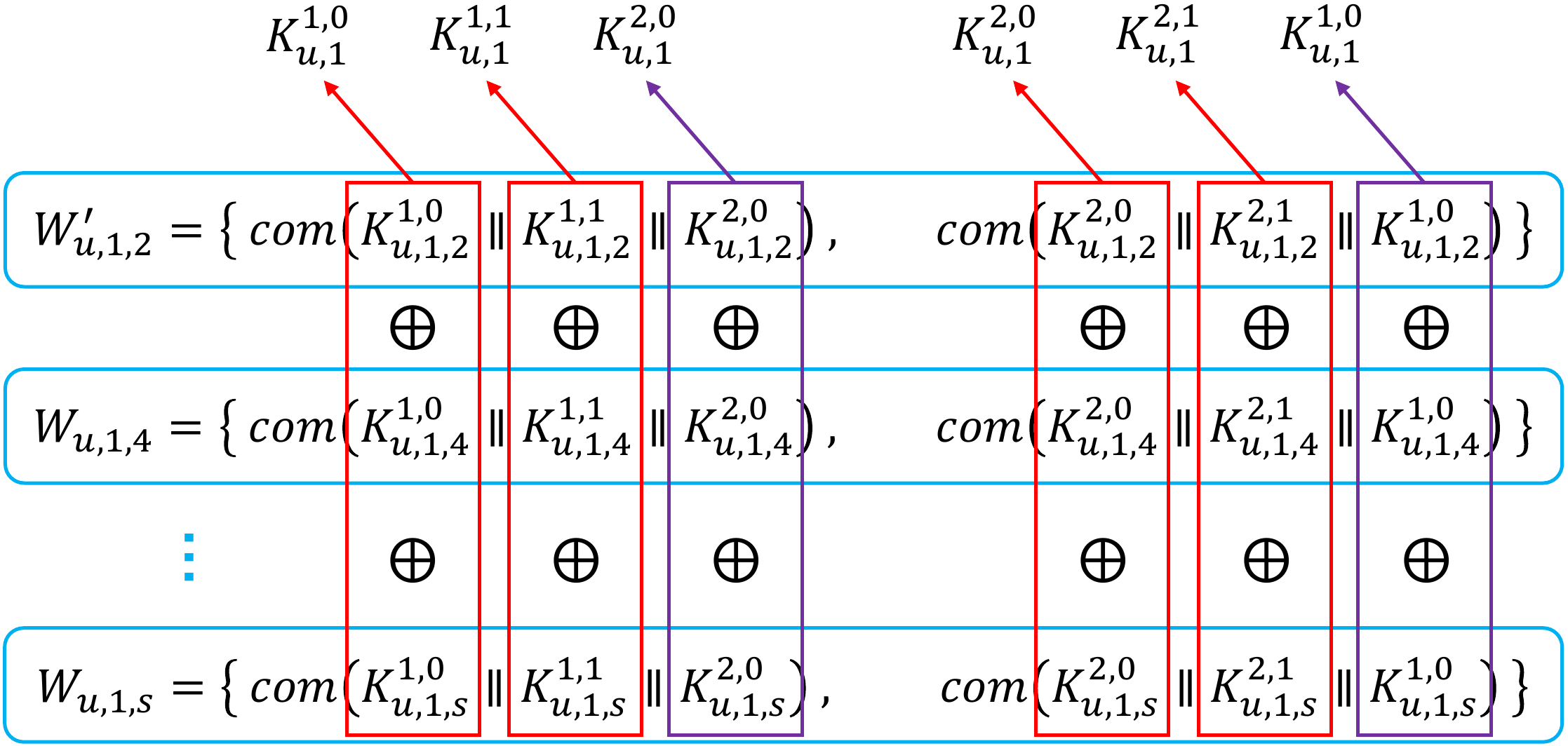}
\caption{Computation of final labels for data provider $u$'s first input wire. }
\label{fig:commitSetsopen2}
\end{figure}

\subsubsection{\bfseries Phase 2: Garbled Circuit Computation}
In this phase, the protocol performs two independent garbled circuit computations to prevent a malicious garbler that attacks by generating a falsified garbled circuit. The protocol proceeds as follows.


\begin{itemize}
\item[1)] ${P_1}$ uses $\{K_{u,i}^{1,0}, K_{u,i}^{1,1}\}_{u \in [1..n], i \in [1..l]}$ as the encodings of all data providers' input bits to generate the first garbled circuit $GC_1$, where $n$ is the number of data providers, and $l$ is the bit length of inputs. Denote the output encodings of $GC_1$ by $$E_1 = \{\overline{K}_{u,i}^{1,0}, \overline{K}_{u,i}^{1,1}\}_{u \in [1..n], i \in [1..l']}$$ where $l'$ is the bit length of outputs. $P_1$ sends $GC_1$ to $P_2$, while holds $E_1$ by itself.
\item[2)] ${P_2}$ uses $\{K_{u,i}^{2,0}, K_{u,i}^{2,1}\}_{u \in [1..n], i \in [1..l]}$ as the encodings of all input bits to generate the second garbled circuit $GC_2$. Denote the output encodings of $GC_2$ by $$E_2 = \{\overline{K}_{u,i}^{2,0}, \overline{K}_{u,i}^{2,1}\}_{u \in [1..n], i \in [1..l']}$$ $P_2$ sends $GC_2$ to $P_1$, while holds $E_2$ by itself.
\item[3)] After receiving $GC_2$, ${P_1}$ uses $\{K_{u,i}^{2,x_{u,i}}\}_{u \in [1..n], i \in [1..l]}$ as input labels to compute $GC_2$, obtaining the output labels $$O_2=\{\overline{K}_{u,i}^{2,y_{u,i}}\}_{u \in [1..n], i \in [1..l']}$$
\item[4)] After receiving $GC_1$, ${P_2}$ uses $\{K_{u,i}^{1,x_{u,i}}\}_{u \in [1..n], i \in [1..l]}$ as input labels to compute $GC_1$, obtaining the output labels $$O_1=\{\overline{K}_{u,i}^{2,y_{u,i}}\}_{u \in [1..n], i \in [1..l']}$$
\end{itemize}

At the end of this phase, $P_1$ holds $(E_1, O_2)$, and $P_2$ holds $(E_2, O_1)$, respectively.




\subsubsection{\bfseries Phase 3: Output Verification}
In this phase, both parties $P_1$ and $P_2$ commit to all the labels and encodings corresponding to the outputs of the two garbled circuits. Then, they open to each data provider $u$ the commitments to the labels and encodings corresponding to its output. Each data provider $u$ can verify if the outputs of the two garbled circuits are consistent, and all the data providers determine cooperatively whether to adapt the computed result. The protocol proceeds as follows.
\begin{itemize}
\item[1)] $P_1$ computes the commitments to the output encodings of the first garbled circuit and the output labels of the second garbled circuit.
\begin{equation}\label{equ:com-e1o2}
\begin{aligned}
com(E_1) =& \{com(E_{1,u})\}_{u \in [1..n]}\\
=& \{com(\|_{i=1}^{l'}(\overline{K}_{u,i}^{1,0} \| \overline{K}_{u,i}^{1,1}))\}_{u \in [1..n]}\\
com(O_2) =& \{com(O_{2,u})\}_{u \in [1..n]}\\
 =& \{com(\|_{i=1}^{l'}\overline{K}_{u,i}^{2,y_{u,i}})\}_{u \in [1..n]}
\end{aligned}
\end{equation}

\item[2)] Symmetrically, $P_2$ computes the commitments to the output encodings of the second garbled circuit and the output labels of the first garbled circuit.
\begin{equation}\label{equ:com-e2o1}
\begin{aligned}
com(E_2) =& \{com(E_{2,u})\}_{u \in [1..n]}\\
=& \{com(\|_{i=1}^{l'}(\overline{K}_{u,i}^{2,0} \| \overline{K}_{u,i}^{2,1}))\}_{u \in [1..n]}\\
com(O_1) =& \{com(O_{1,u})\}_{u \in [1..n]}\\
 =& \{com(\|_{i=1}^{l'}\overline{K}_{u,i}^{1,y_{u,i}})\}_{u \in [1..n]}
\end{aligned}
\end{equation}

\item[3)] Both $P_1$ and $P_2$ publish all the commitments to all data providers.

\item[4)] $P_1$ opens $com(E_{1,u})$, $com(O_{2,u})$, and $P_2$ opens $com(E_{2,u})$, $com(O_{1,u})$, to each data provider $u$.

\item[5)] Each data provider $u$ decodes $O_{1,u}$ with $E_{1,u}$ to get $y_u^{(1)}$, decodes $O_{2,u}$ with $E_{2,u}$ to get $y_u^{(2)}$, and checks if $y_u^{(1)}=y_u^{(2)}$. If the check fails, $u$ proves this by publishing the openings in 4), namely $O_{1,u}$, $E_{1,u}$, $O_{2,u}$, $E_{2,u}$ and their corresponding random parameters, to other data providers.

\item[6)] If there is no fail proof, each data provider $u$ accepts its output $y_u$; otherwise, $u$ discards its output.
\end{itemize}

Note that in this phase, the data providers decode their outputs by themselves, without $P_1$ and $P_2$ learning anything about the plain outputs.

\subsection{Security Analysis}
In this section, we analyze the security of our framework against malicious adversaries. As we know, a malicious adversary may use any efficient attack strategies and may deviate from the protocol specification arbitrarily. The formal definition of security against malicious attackers refers to \cite{goldreich2009foundations}.

In our paper, a malicious adversary refer to the situations that one of parties ${P_1}$ and ${P_2}$ may be malicious, or a subset of data providers may be malicious. For instance, party ${P_1}$ or ${P_2}$ may tamper with the logic of the garbled circuits when it acts as a garbler, or the data providers may submit inconsistent inputs to both parties.

\begin{theorem}\label{the:malicious}
If either ${P_1}$ or ${P_2}$ is maliciously corrupted, or a subset of data providers are maliciously corrupted, our framework is secure against malicious adversaries.
\end{theorem}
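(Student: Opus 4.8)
The plan is to establish security via the standard simulation paradigm, handling the two disjoint corruption cases separately: (i) a malicious computation party (say $P_1$, the case of $P_2$ being symmetric by the role-swap), and (ii) an arbitrary subset $\mathcal{C}$ of malicious data providers (including the extreme $\mathcal{C}=[1..n]$). For each case I would construct a probabilistic polynomial-time simulator $\mathcal{S}$ that, given only the corrupted party's inputs and the ideal-functionality output, produces a view computationally indistinguishable from the adversary's real-world view. The high-level structure mirrors the three-phase decomposition of the protocol, so I would argue each phase in turn and then compose.

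First I would handle the corrupted-$P_1$ case. Here $\mathcal{S}$ plays the honest data providers and $P_2$. In Phase 1, $\mathcal{S}$ generates the commitment sets honestly but with arbitrary placeholder bit values, relying on the hiding property of $com$ and on the randomness of the $b_{u,i,j}$'s to argue indistinguishability even when both $W_{u,i,j}$ and $W'_{u,i,j}$ are opened for a check set, or one is opened for an evaluation set. The coin-tossing for $\rho$ is simulated in the usual way (extract $\rho_1$ from $P_1$, then choose $\rho_2$ to hit a target $\rho$ — though since $\mathcal{S}$ has no target here it simply runs it honestly). For Phase 2, $\mathcal{S}$ must simulate $GC_2$ without knowing the honest providers' inputs: it invokes the garbled-circuit simulator from the semi-honest literature (cited as \cite{lindell2009proof}) to produce a fake $GC_2$ together with input labels $\{K_{u,i}^{2,x_{u,i}}\}$ that $P_1$ will "evaluate"; crucially, since $P_1$ as garbler of $GC_1$ only ever sees input \emph{encodings} and never decodes $O_1$, and since in Phase 3 $P_1$ learns only its own output encodings $E_1$ and the labels $O_2$ (never the honest providers' plaintext outputs), a falsified $GC_1$ produced by $P_1$ does no harm — $\mathcal{S}$ simply ignores or re-randomizes it. The key point, which I would state as a lemma, is that a malicious garbler cannot affect correctness or privacy because neither computation party ever decodes any output; all decoding happens at the data providers in Phase 3. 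The label-consistency check is simulated by having $\mathcal{S}$ supply consistent hashes and reorderings, using hiding of the commitments $C^{k,q}$.

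Next, the corrupted-data-providers case. Here $\mathcal{S}$ plays $P_1$, $P_2$, and the honest providers. The crux is \emph{input extraction}: $\mathcal{S}$ must recover the effective input $x_u$ of each corrupted $u$ to forward to the ideal functionality. I would argue that the commitment-construction check plus the label-consistency check force each corrupted $u$ to have a well-defined consistent bit on every input wire except with probability $2^{-s+1}$ — this is exactly the failure bound already asserted in the design-rationale section, and I would prove it by the counting argument that a cheating $u$ must present consistent labels in every check set (index $j$ with $\rho_j=1$) and inconsistent labels in every evaluation set (index $j$ with $\rho_j=0$), an event of probability $2^{-s}$ per side, hence $2^{-s+1}$ overall, over the uniform choice of $\rho$ guaranteed by the coin toss. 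Since $com$ is \emph{perfectly binding}, $\mathcal{S}$ extracts $x_u$ by reading the committed position indicators and label bits. If a corrupted coalition induces an abort ("Bad Input") or a failed output verification, $\mathcal{S}$ causes the ideal execution to abort identically; the publicly verifiable cheating proof ensures the abort is consistent across honest providers' views. Privacy of honest providers' inputs against the coalition follows because the coalition sees only commitments to honest inputs (hiding) and its own opened output encodings/labels — never honest providers' outputs, which are opened only to the respective honest owner.

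The main obstacle I anticipate is the Phase-2 argument that running \emph{two} independent garbled circuits with the roles swapped, but with \emph{neither} party decoding, genuinely suffices to neutralize a malicious garbler — in particular, arguing that an incorrectly constructed $GC_1$ by $P_1$ cannot create a selective-failure or output-substitution attack that leaks a bit of an honest provider's input through the Phase-3 verification outcome. I would address this by observing that the Phase-3 comparison $y_u^{(1)}\stackrel{?}{=}y_u^{(2)}$ and its outcome are hidden from both $P_1$ and $P_2$ (they see only commitments, opened solely to the data providers), so the adversary controlling one garbler learns nothing from whether verification passes; and a falsified circuit that produces a wrong $y_u^{(1)}$ merely triggers a (publicly provable) rejection, matching an ideal-world abort. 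A secondary subtlety is ensuring the $2^{-s+1}$ soundness bound composes correctly across all $ln$ input wires and $n$ providers via a union bound, yielding overall cheating probability at most $ln\cdot 2^{-s+1}$, which is negligible for $s=\omega(\log(ln))$; I would make the choice of $s$ explicit. The remaining steps — hybrid arguments swapping real commitments for simulated ones and the real garbled circuit for the simulated one — are routine and I would only sketch the hybrid sequence.
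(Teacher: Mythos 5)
Your proposal follows essentially the same route as the paper's own (sketched) argument: the same case split between a corrupted computation party and a corrupted subset of data providers, the same phase-by-phase analysis resting on the facts that neither $P_1$ nor $P_2$ ever decodes an output and that the verification outcome is hidden from them, and the same key cut-and-choose lemma giving the $2^{-s+1}$ inconsistency bound. You go further than the paper in rigor---explicit simulators, input extraction via the perfectly binding commitments, and the union bound over all $ln$ wires forcing $s=\omega(\log(ln))$---but these are refinements of the paper's argument rather than a different approach.
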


We only sketch the proof. Since the roles of $P_1$ and $P_2$ are exactly symmetric, we analyze the security in two cases as follows.

\textbf{Malicious \bm{$P_1$}}: In this case, all data providers and $P_2$ are honest and follow the protocol. We analyze the security by distinguishing the three phases.

In the \emph{input consistency check} phase, all commitments are correctly constructed, and the adversary can only see all commitments, input encodings of the first garbled circuit $GC_1$ and input labels of the second garbled circuit $GC_2$. The adversary can learn nothing about the inputs. If it wants to deviate the protocol, all it can do is to falsify a check failure and abort the protocol. However, any check failure can be verified publicly by honest data providers and $P_2$, so the cheating of $P_1$ would be caught.

In the \emph{garbled circuit computation} phase, since the standard garbled circuit techniques, used in our work, are secure against malicious evaluators, we need only to analyze the security when $P_1$ runs $GC_1$ as a garbler. Because in the computation of $GC_1$, $P_1$ learns nothing about the inputs and outputs, the only attack is to generate an incorrect garbled circuit. However, this attack informs the adversary nothing about the inputs, since the outputs cannot be observed, and would be caught by the data providers in the output verification.

In the \emph{output verification} phase, the adversary only sees output encodings of $GC_1$ and output labels of $GC_2$, and learns nothing about the inputs and outputs. The adversary can attack by modifying the output labels, which would be caught by the data providers in the output verification.




The above analysis shows that the security holds in this case. The security of the case in which ${P_2}$ is malicious can be analyzed similarly.

\textbf{Malicious data providers.} In this case, both $P_1$ and $P_2$ are honest, and a subset of data providers (including the case of all data providers) are corrupted. We analyze the security by distinguishing both input consistency check and output verification phases, since data providers are not involved in the garbled circuit computation phase.

In the \emph{input consistency check}, we begin by proving the lemma below.

\begin{lemma}
If one input wire of a data provider $u$ has inconsistent labels for the two garbled circuits, the protocol aborts with probability of at least $1 - {2^{ - s + 1}}$.
\end{lemma}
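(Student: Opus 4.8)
The plan is to analyze what a malicious data provider $u$ must achieve to pass both checks (commitment construction check and label consistency check) while still supplying inconsistent labels, and to bound the probability of each path to success. Recall that each of the $s$ pairs of commitment sets $\{W_{u,i,j},W'_{u,i,j}\}$ is independently assigned, via the coin-tossing string $\rho=\rho_1\oplus\rho_2$ with $P_2$ honest, to be a \emph{check} pair (bit $1$) or an \emph{evaluation} pair (bit $0$), each with probability $1/2$, and the assignments are independent across $j$. I will treat ``inconsistent labels on wire $i$'' as the statement that the final labels $K^{1,3}=\oplus_{j\in\mathcal{J}}K_j^{1,3}$ (opened to $P_1$) and the pair $(K^{1,1},K^{1,2})=(\oplus_j K_j^{1,1},\oplus_j K_j^{1,2})$ do not satisfy the relation $K^{1,3}\in\{K^{1,1},K^{1,2}\}$ representing the same bit as on the $P_2$ side — equivalently, after hashing, $H^{1,3}\notin\{\widetilde H^{2,1},\widetilde H^{2,2}\}$ — and symmetrically for the other circuit.

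\textbf{Key steps.} First I would observe that for each individual pair index $j$, if that pair is chosen as a check pair, then both $W_{u,i,j}$ and $W'_{u,i,j}$ are fully opened and $P_1,P_2$ verify that the two committed triples form valid encodings and that the third component of each is the label matching the corresponding bit of the other's encoding; since $com$ is perfectly binding, a malformed pair is detected with certainty. Hence if \emph{any} $j$ for which the committed triples are ``bad'' (internally inconsistent) is selected as a check pair, the protocol aborts — which is still an abort, consistent with the lemma's conclusion. Second, I would argue that for the protocol to \emph{complete with inconsistent final labels}, the adversary needs the XOR-aggregated triples over the evaluation index set $\mathcal{J}$ to be inconsistent; because the label consistency check (steps 1–6 of that check) with collision-resistant $H$ and perfectly binding $com$ catches any inconsistency in the aggregated values $H^{1,3}$ versus $\{\widetilde H^{2,1},\widetilde H^{2,2}\}$ with certainty (up to hash collisions, which occur with negligible probability and can be absorbed into the bound), the only way to pass is for the aggregate over $\mathcal{J}$ to be \emph{consistent} even though the wire is globally inconsistent. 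Third — the crux — I would show that this forces the adversary to have at least one ``inconsistency-contributing'' index $j^\*$ that must land in $\mathcal{J}$ (be an evaluation pair) \emph{and} whose partner structure, together with the other evaluation indices, cancels out; more simply, I would argue there must be at least one index $j^\*$ such that: if $j^\*$ is a check pair the commitment check catches it (abort), and if $j^\*$ is an evaluation pair the label consistency check catches it (abort). If such a single ``bad'' index exists, it is selected as a check pair with probability $1/2$ and as an evaluation pair with probability $1/2$, so it is caught with probability $1$ — but that gives a bound stronger than claimed, so the real situation must be that inconsistency can be ``spread'' across pairs so that no single pair is individually bad. Then I would refine: the adversary can make each pair individually well-formed but choose the committed \emph{position} bits $b_{u,i,j}$ and/or the per-copy labels so that the XOR over whichever subset $\mathcal{J}$ is selected yields a consistent aggregate only for \emph{one particular} subset $\mathcal{J}_0$; i.e., cheating succeeds only if the realized evaluation set exactly equals (or is contained in / complements) one adversarially targeted pattern. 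Since $\mathcal{J}$ is the set of zero-bits of a uniformly random $s$-bit string (because $P_2$ is honest and contributes fresh randomness to $\rho$), the probability that $\mathcal{J}$ hits that one bad pattern is at most $2^{-s}$; accounting for the two symmetric directions (the two garbled circuits, or the ``all check'' versus ``all evaluation'' degenerate patterns the design rationale mentions) doubles this to $2^{-s+1}$, giving the stated abort probability $1-2^{-s+1}$.

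\textbf{Main obstacle.} The hard part will be making precise the claim in the third step: characterizing exactly which configurations of committed triples and position bits let a malicious $u$ survive, and showing that every such configuration is ``caught'' unless the random partition $\rho$ falls into a set of at most two bad patterns. This requires carefully using the \emph{perfect binding} of $com$ (so the adversary is committed to all $s$ copies before seeing $\rho$) and the structure of the XOR aggregation — specifically, that if the per-copy contributions were all individually consistent then the aggregate would be consistent for \emph{every} $\mathcal{J}$, so inconsistency of the aggregate forces inconsistency of an odd number of individual copies, each of which is then a ``bad'' pair that the check on its own would detect, pushing the detection probability up to $1-2^{-s+1}$ at worst when the adversary concentrates all badness so that detection only fails when none of the bad copies is a check pair (and the evaluation-side check also happens to be fooled). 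I would therefore structure the final argument as: let $B\subseteq[1..s]$ be the set of pair indices whose committed triples (as XOR-contributions) are individually inconsistent; global inconsistency forces $B\neq\emptyset$; the commitment construction check aborts unless $B\cap(\text{check set})=\emptyset$, i.e. unless $B\subseteq\mathcal{J}$; and the label consistency check over $\mathcal{J}$ then aborts unless the contributions from $B$ happen to XOR-cancel within $\mathcal{J}$ — and I would bound the total survival probability over the uniform choice of $\rho$ by $2^{-|B|}\le 2^{-1}$ in the worst case, then tighten using the complementary ``all evaluation'' failure mode to reach exactly $2^{-s+1}$, matching the claim. The cleanest route may be to reduce directly to the standard cut-and-choose argument: cheating succeeds only if the check set is empty or the evaluation set is empty or the single targeted pattern is realized, a union of events of total probability $\le 2\cdot 2^{-s}=2^{-s+1}$.
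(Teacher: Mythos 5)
Your proposal assembles the right ingredients---the commitment construction check catching malformed check pairs, the label consistency check over the evaluation set, perfect binding fixing the adversary's choices before $\rho$ is sampled, and a counting argument over the uniformly random challenge---and this is indeed the skeleton of the paper's argument. But there is a concrete gap at the quantitative step. Letting $B\subseteq[1..s]$ be the set of copy indices whose committed triples are inconsistent, you use the commitment construction check to force $B\subseteq\mathcal{J}$ and bound the survival probability by $\Pr[B\subseteq\mathcal{J}]=2^{-|B|}$, which is only $1/2$ when $|B|=1$. The subsequent ``tightening'' to $2^{-s+1}$ is asserted rather than derived, and neither of your proposed sources for the factor of two (the ``two symmetric directions'', or a ``single targeted pattern'' whose necessity you do not establish) actually closes the distance from $2^{-1}$ to $2^{-s+1}$.

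The missing half of the constraint is what the label consistency check buys. That check compares the XOR-aggregated hash $H^{1,3}$ (resp.\ $H^{2,3}$) against \emph{both} reordered candidates $\{\widetilde{H}^{2,1},\widetilde{H}^{2,2}\}$, and---under collision resistance of $H$---it passes only when the evaluation copies are \emph{all} consistent or \emph{all} inconsistent; a mixed evaluation set produces an aggregate matching neither candidate. Since an all-consistent evaluation set yields consistent final labels (no cheat occurs), an undetected cheat additionally requires $\mathcal{J}\subseteq B$, hence $\mathcal{J}=B$ exactly. Because $com$ is perfectly binding, $B$ is fixed before $\rho$ is chosen, so the adversary survives only if the random challenge realizes the single string whose zero positions are exactly $B$, an event of probability $2^{-s}$; excluding the degenerate challenges (empty check set or empty evaluation set) relaxes this to the claimed $2^{-s+1}$. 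This ``all consistent in the check sets, all inconsistent in the evaluation sets'' dichotomy is precisely the paper's one-line proof; your write-up circles it but never pins $\mathcal{J}$ down to a single pattern, which is the whole content of the bound.
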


\begin{proof}\renewcommand{\qedsymbol}{} 
The commitment construction check ensures that the labels in check sets must be all consistent. Otherwise, the check will fail. The label consistency check ensures that the labels in the input commitment sets of evaluation sets must be all consistent or all inconsistent (cf. Fig.~\ref{fig:check}). Otherwise, the check will also fail. Thus, the only successful cheating is the case that the labels in check sets are all consistent, and the labels in input commitment sets are all inconsistent, which happens with the probability $2^{-s}$. Furthermore, since there must be at least an evaluation set, and the case of $\rho=\left \langle 00\cdots 0 \right \rangle$ should be eliminated, the probability of cheating becomes $2^{-s+1}$.

The proof is completed. $\Box$
\end{proof}

If $s$ becomes large, the probability of cheating can be sufficiently small. The security against the malicious data providers can be adaptively achieved in term of $s$ value. Furthermore, any dishonest data provider will be identified exactly by a proof.

In the \emph{output verification} phase, both $P_1$ and $P_2$ commit to all the output encodings and labels of the two garbled circuits correctly, and open the commitments to each data provider honestly. The malicious data providers cannot change the output of the computation, since all the encodings and labels are correctly committed, and the output can be publicly verified if necessary.

From the discussion of the above two cases, we conclude that our protocol is secure against malicious adversaries.

\section{Application and Experiments}\label{sec:experiments}
In this section, we implement our secure computation framework instantiated with a cloud resource auction mechanism, and conduct extensive experiments to evaluate the performance in term of computation and communication overheads.

\subsection{Cloud Resource Auction}\label{sec:auction}

\subsubsection{Auction Mechanism}
We consider a truthful cloud resource auction mechanism from paper \cite{wang2012cloud}. In such an auction, a cloud provider provides $m$ types of virtual machine (VM) instances (with various configurations in CPU, memory, storage, etc. ). Each type $i$ has ${k_i}$ VM instances, and its weight is denoted by ${\omega _i}$, which is used to distinguish between different VM types. There are $n$ bidders requesting VM instances. Each bidder $j\left( {1 \le j \le n} \right)$ requests $k_j^i$ VM instances of type $i$, and the corresponding per-instance bid is $b_j^i$.
The auction proceeds as follows.

(1) The cloud provider computes the {\itshape average price per weighted instance} of each bidder $j$ :
\[ \overline{b}_j = \frac{{\mathop \sum \nolimits_{i = 1}^m k_j^ib_j^i}}{{\sqrt {\mathop \sum \nolimits_{i = 1}^m k_j^i{\omega _i}} }} \ , \ {\rm for} \ j \in [1..n] \]
and ranks the bidders in descending order of $\overline{b}_j$.

(2) After sorting, VM instances are allocated to the bidders by a greedy algorithm. If the VM instances requested by the bidder $j$ meet the following conditions:
\[\forall i \in \left\{ {1,2, \cdots ,m} \right\}\ ,\ \mathop \sum \nolimits_{t = 1}^{j - 1} k_t^i{x_t} + k_j^i \le {k_i},\]
the request is satisfied, and ${x_j} = 1$. Otherwise, ${x_j} = 0$.

(3) The cloud provider applies the following pricing rules to get a critical payment:
\begin{itemize}
  \item A bidder whose request is denied pays 0;
  \item A bidder having no critical bidder pays 0;
  \item A bidder who receives VM instances requested pays the critical value.
\end{itemize}
where the critical bidder $j_c$ of a bidder $j$ is the first bidder following $j$ that has failed but would win without $j$, and the critical value $v_c^j$ of $j$ is the minimum value $j$ should bid in order to win. It can be shown that
$$
v^j_c = \overline{b}_{j_c} \sqrt {\mathop \sum \nolimits_{i = 1}^m k_j^i{\omega _i}}
$$

Thus, the cloud provider charges each winning bidder $j$ ($x_j = 1$) the amount $v^j_c$.



\subsubsection{Data-oblivious Auction Algorithm}
To secure the cloud resource auction by the technique of garbled circuits, the auction mechanism must be first converted into a data-oblivious algorithm, whose execution path does not depend on the input values. An example of data-oblivious algorithm is illustrated in Fig.~\ref{fig:dataoblivious}. The data-oblivious algorithm can then be easily converted into a Boolean circuit. Instead of designing the data-oblivious algorithm ourself, we simply adopt one in paper \cite{chen2016privacy}. The details of the specific algorithm refer to Algorithm 2 in \cite{chen2016privacy}.
\begin{figure}[ht]
\centering
\includegraphics[width=0.49\textwidth]{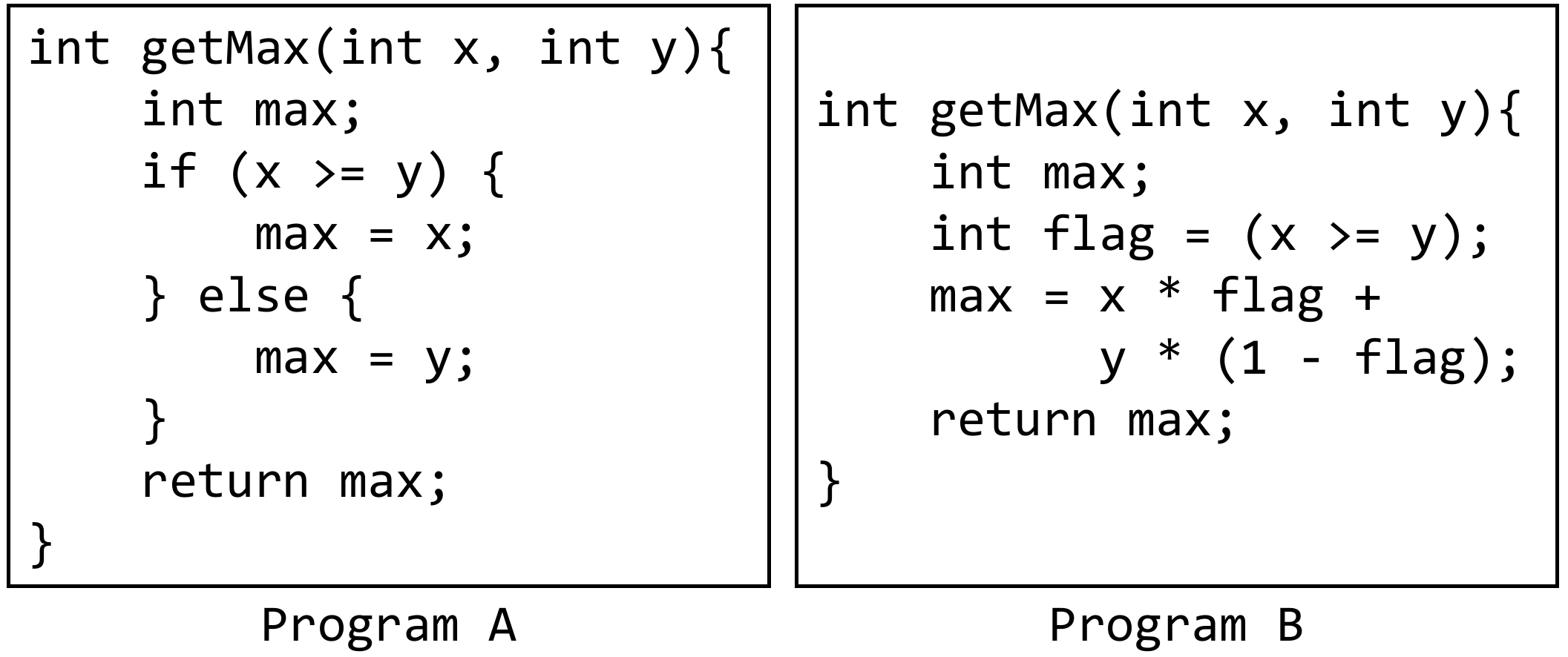}
\caption{\small An illustration of data-oblivious algorithms. Both algorithms find the maximum value of two numbers $x$ and $y$. The execution path of {\itshape Program A} depends on the input values $x$ and $y$, while that of {\itshape Program B} is the same regardless of the input values $x$ and $y$.}
\label{fig:dataoblivious}
\end{figure}

\subsection{Experimental Settings}
We instantiate our framework with the cloud resource auction mechanism in Section~\ref{sec:auction}. In the implementation, two non-colluding servers, which provide cryptographical services, are first determined to play the roles of computation parties $P_1$ and $P_2$. In practice, these two non-colluding servers may be from two security companies with well-deserved reputations, respectively. Then, the bidders play the roles of data providers, who submit their bids as inputs to $P_1$ and $P_2$, and receive their respective outputs from the two parties. Particularly, we let the cloud provider also play the role of a data provider, with the exception that the cloud provider submits no input, while receives all outputs.

We implement our framework on FastGC \cite{huang2011faster}, which is a Java-based framework for garbled circuit computation. We use SHA-256 hash function for computing hash values or doing commitment. In our simulation experiments, the number of virtual machine (VM) types is set $m = 6$ by default, and the number of instances for each VM type is set to a parameter $k$. The number of pairs of commitment sets for each input wire is set to $s = 10$, and all values in the auction are represented by $16$ bits. Each bidder $j\left( {1 \le j \le n} \right)$ requests a random number of VM instances ranging in $[0,3]$ with a per-instance bid ranging in $[0,100]$ for each VM type. We simulate our protocol on a 64-bit Windows 10 Desktop with Intel(R) Core(TM) i5 CPU @3.40GHz and 16GB of memory. In our experiments, we focus on the following performance metrics:
\begin{itemize}
  \item {\itshape Running time:} The total time spent by $P_1$, $P_2$ and the bidders during the execution of an auction.
  \item {\itshape Communication overhead:} The total volume of messages sent by $P_1$, $P_2$, and the bidders during the execution of an auction.
\end{itemize}

\subsection{Experimental Results}

We evaluate the performance of the protocol by varying the number of VM types $m$, the number of bidders $n$, and the instance count per VM type $k$.

We first set $m=6$, and evaluate the performance by varying $n$ and $k$ values. Fig.~\ref{fig:n100n150n200} shows the experimental results as $k$ increases from 50 to 300, for $n = 100$, $150$ and $200$, respectively. we can observe that in each case the total running time and communication overhead nearly remain the same, no matter how $k$ varies. Conversely, if $k$ is fixed, and $n$ increases, the running time and communication overhead will increase accordingly. This indicates that the performance of the protocol is hardly affected by the instance count per VM type $k$, but is significantly affected by number of bidders $n$.

Second, we set $n = 150$, and evaluate the performance by varying $m$ and $k$ values. Fig.~\ref{fig:m6m12m18n150} traces experimental results as $k$ increases from 50 to 250, for $m=6$, $12$, and $18$, respectively. We can see that the running time and communication overhead change little as $k$ increases when both $m$ and $n$ values are fixed. Conversely, when $k$ is fixed, and $m$ increases from 6 to 18, the running time and communication overhead will increase approximately linearly. This shows that the performance of the protocol is hardly affected by the instance count per VM type $k$, but is significantly affected by number of VM types $m$.

Third, we set $m=6$, and evaluate the performance by varying $n$ and $k$ values. In Fig.~\ref{fig:k75k150}, we set $k = 75$ and $150$ respectively, and track the experimental results as $n$ increases from 50 to 300. We can observe that, when $k$ is fixed, in other words, the resources are limited, the running time and communication overhead increase super-linearly as $n$ increases. When fixing $n$, the differences in running time and communication overhead is small for $k =75$ and $k = 150$. We can observe the same as that from Fig.~\ref{fig:n100n150n200}.

Finally, we set $k=100$, and evaluate the performance by varying $n$ and $m$ values. Fig.~\ref{fig:m6m12m18k100} illustrates the comparison of the running time and communication overhead as $n$ increases when adopting different $m$ values. We fix $k$ to 100 and set $m$ to 6, 12 and 18, respectively. We can observe that the running time and communication overhead increase super-linearly as $n$ increases from 50 to 250. Moreover, when fixing $n$ value, the running time and communication overhead will increase significantly as $m$ increases. This demonstrates that the performance of the protocol is significantly affected by both the instance count per VM type $k$ and number of VM types $m$.

From these experimental analysis above, we can conclude that the performance is affected approximately linearly by $m$, super-linearly by $n$, and hardly by $k$. Furthermore, when the number of bidders $n$ takes values of several hundreds, the running time is tens of minutes, and the communication overhead is thousands of MB, on our PC platform. This is acceptable in practice for some small-scale applications at least.

\begin{figure}[htbp]
  \centering
  \subfigure[]{\includegraphics[width=0.222\textwidth]{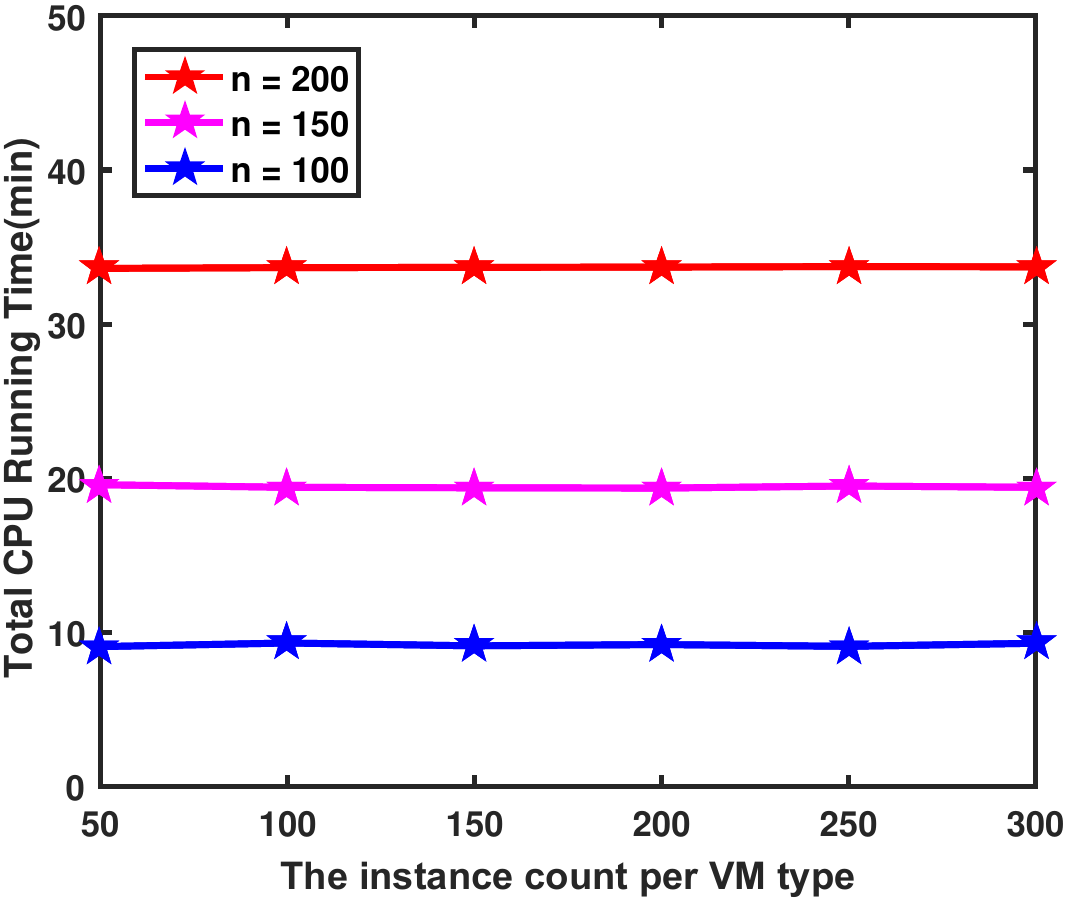}}
  \quad
  \subfigure[]{\includegraphics[width=0.228\textwidth]{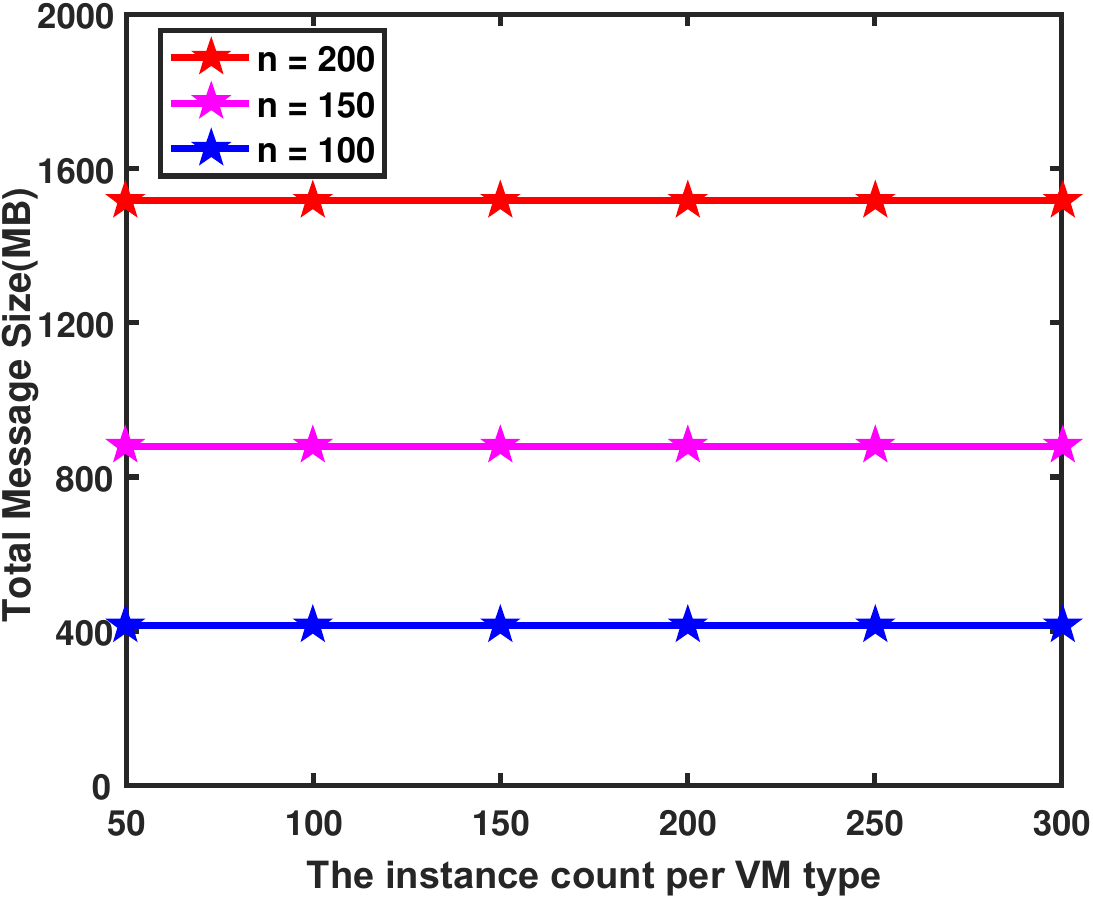}}
  \caption{Performance comparison as the instance count per VM type $k$ varies for three different values of the number of bidders $n$.}\label{fig:n100n150n200}
\end{figure}
\begin{figure}[htbp]
  \centering
  \subfigure[]{\includegraphics[width=0.222\textwidth]{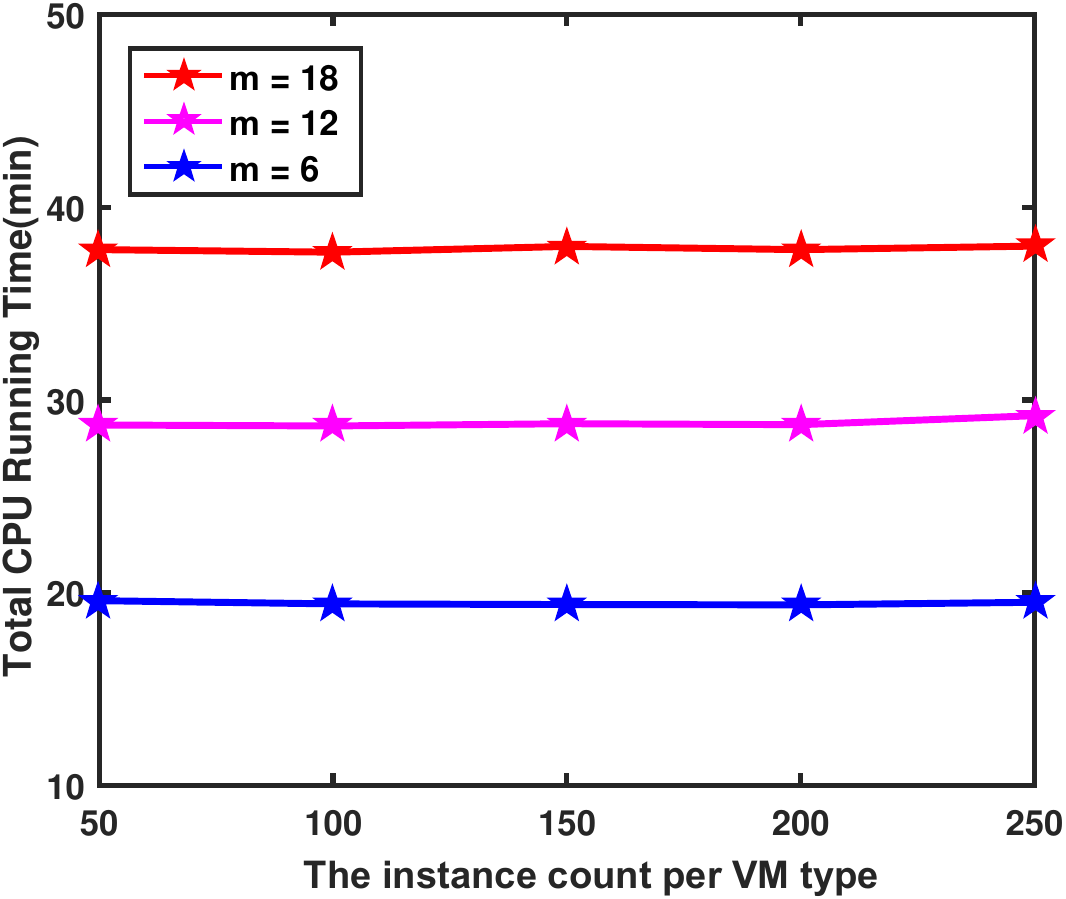}}
  \quad
  \subfigure[]{\includegraphics[width=0.228\textwidth]{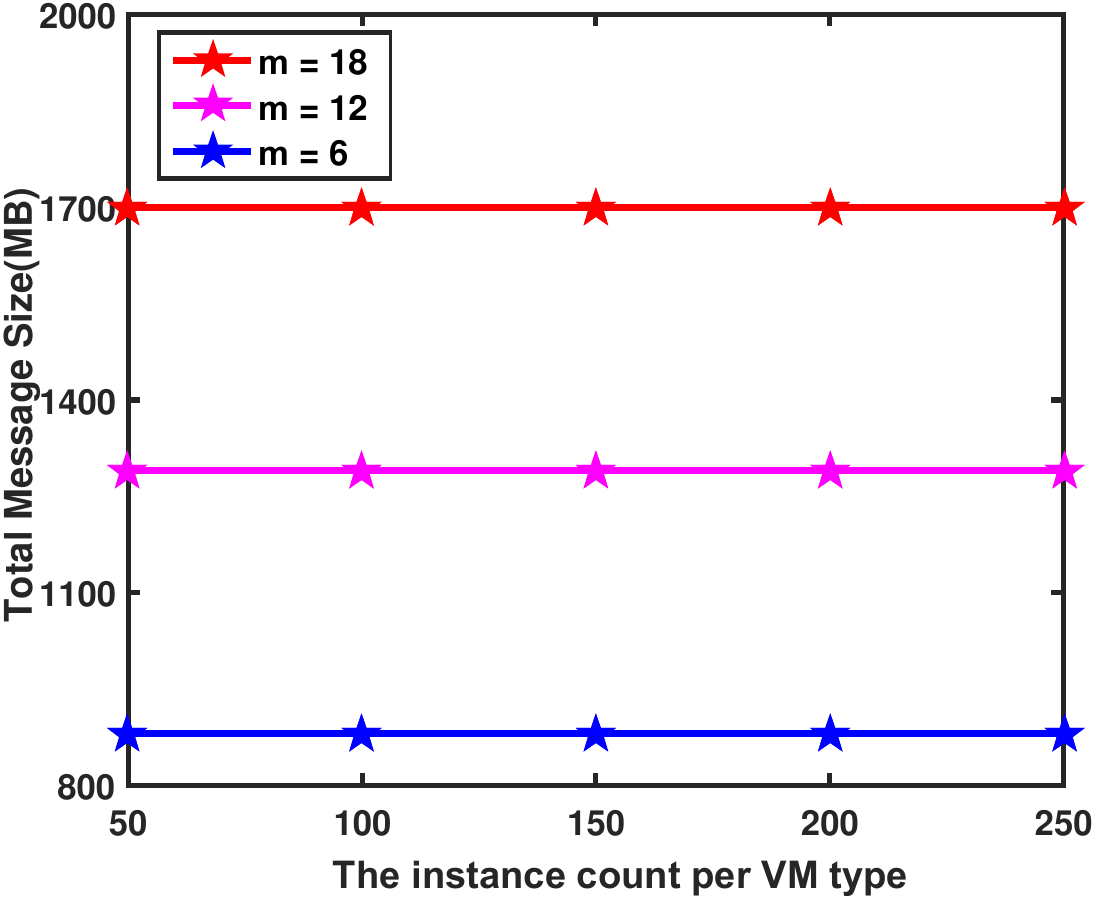}}
  \caption{Performance comparison as the instance count per VM type $k$ varies for three different values of the number of VM types $m$.}\label{fig:m6m12m18n150}
\end{figure}
\begin{figure}[htbp]
  \centering
  \subfigure[]{\includegraphics[width=0.222\textwidth]{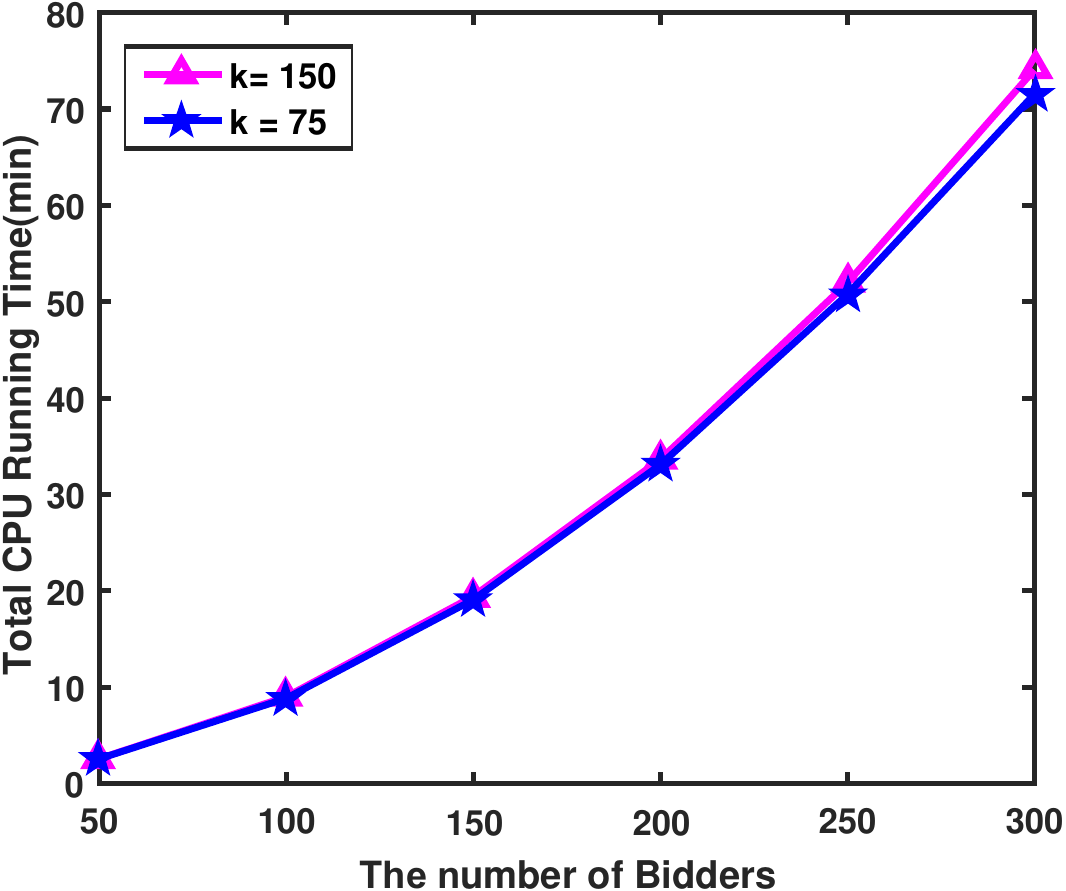}}
  \quad
  \subfigure[]{\includegraphics[width=0.228\textwidth]{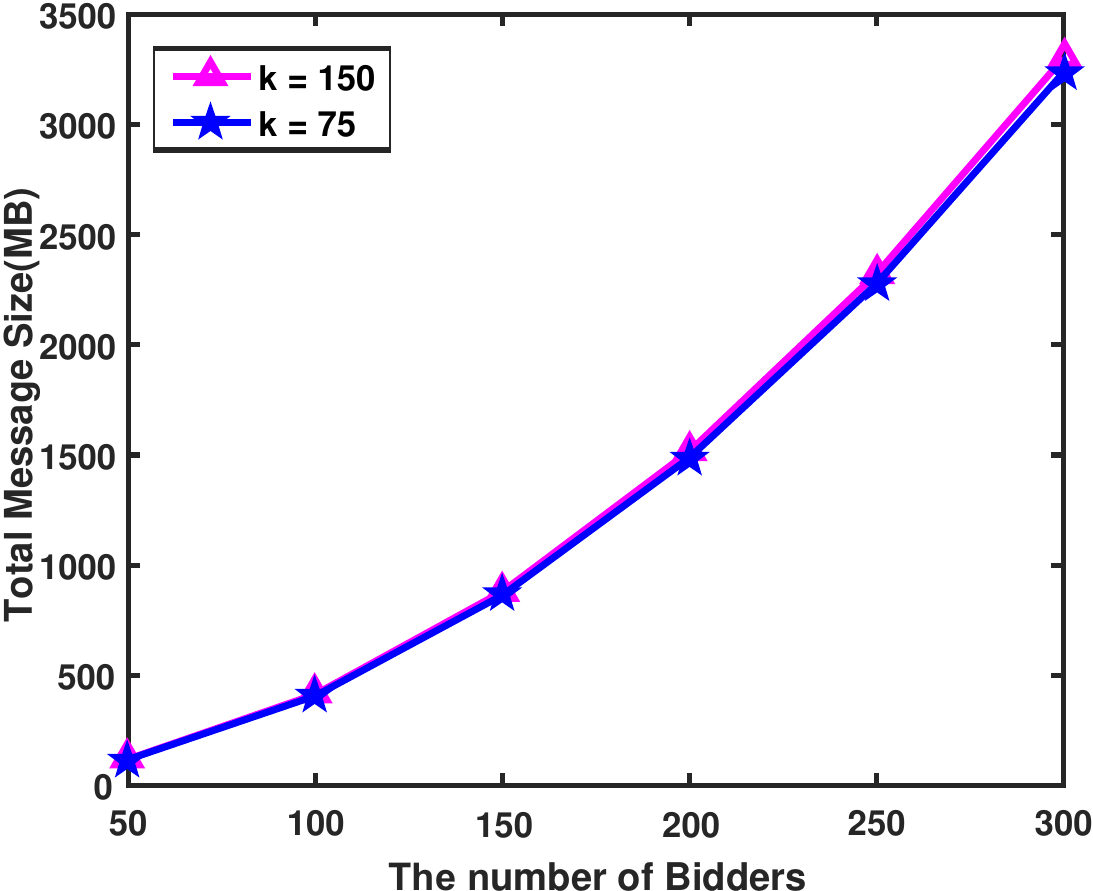}}
  \caption{Performance comparison as the number of bidders $n$ varies for two different values of the instance count per VM type $k$.}\label{fig:k75k150}
\end{figure}

\begin{figure}[htbp]
  \centering
  \subfigure[]{\includegraphics[width=0.222\textwidth]{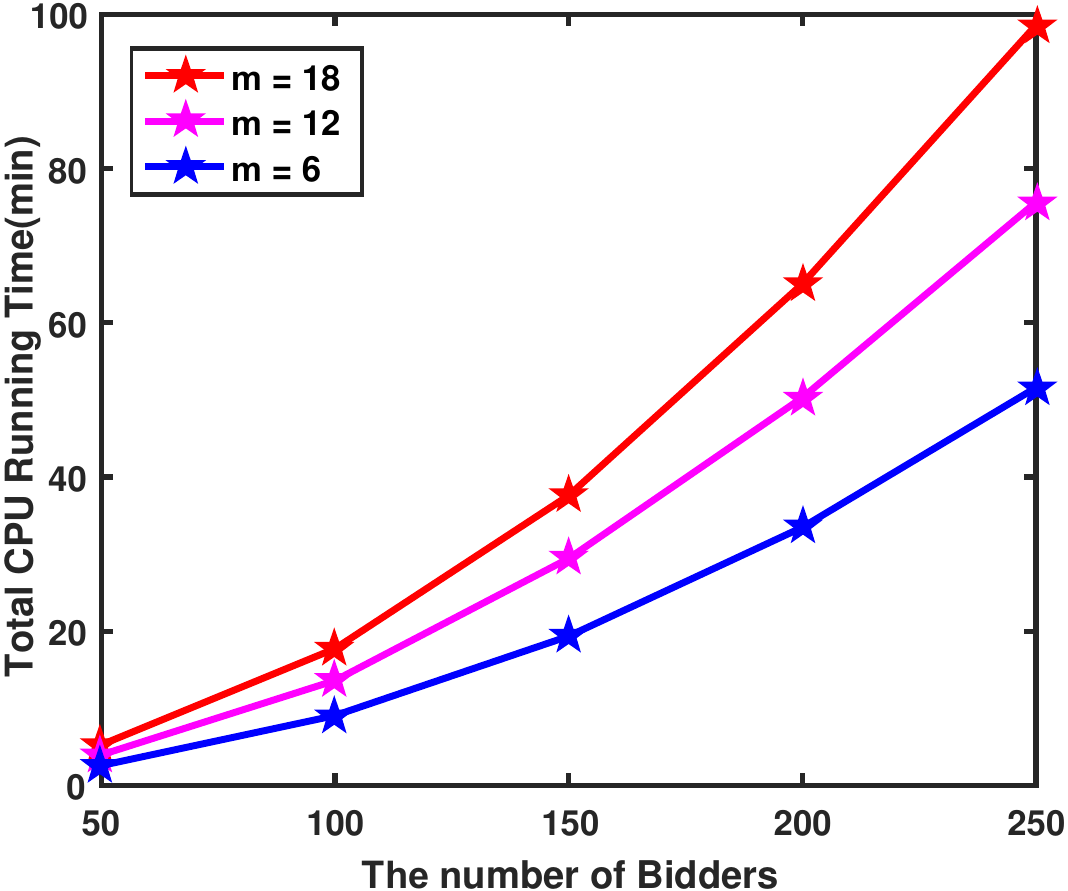}}
  \quad
  \subfigure[]{\includegraphics[width=0.228\textwidth]{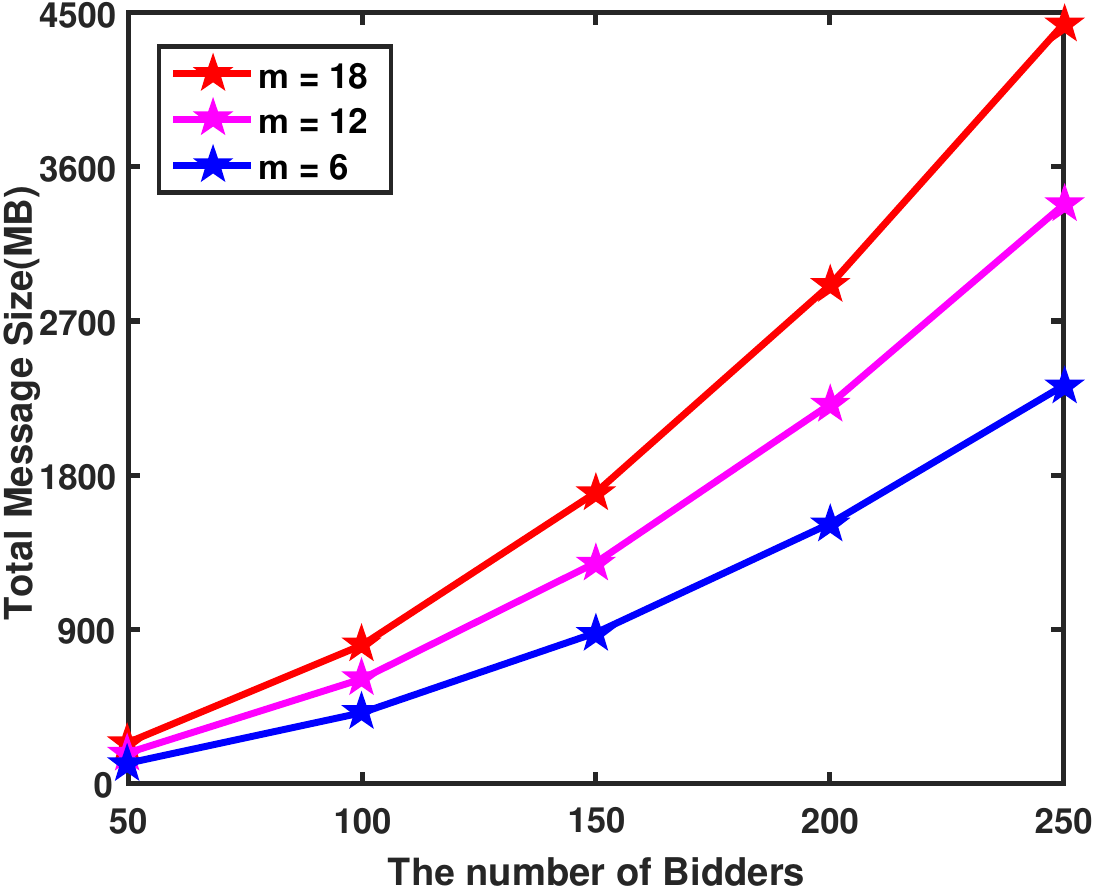}}
  \caption{Performance comparison as the number of bidders $n$ varies for three different values of the the number of VM types $m$.}\label{fig:m6m12m18k100}
\end{figure}

\section{Conclusions}\label{sec:conclusion}
In this paper, we have presented a general secure computation framework for multiple data providers. Making use of the specific context, our framework is able to achieve security against malicious adversaries by running only two independent garbled circuits, with the roles of computation parties swapped. To defend against malicious data providers, we have designed an input consistency check mechanism using the cut-and-choose paradigm. This mechanism enables that all data providers submit the same inputs for the two garbled circuits, and is able to identify accurately the data providers who cheat by submitting inconsistent inputs with a cheating proof. We have also designed an output verification mechanism, such that the correctness of outputs can be verified publicly by all the data providers, without both computation parties knowing anything about the output. We theoretically prove the security of the entire framework. Furthermore, we have implemented our framework in a cloud resource auction scenario, and extensive experimental results show that the performance of our framework is acceptable in practical applications.

\bibliographystyle{ACM-Reference-Format}
\bibliography{ref}


\begin{thebibliography}{39}


\ifx \showCODEN    \undefined \def \showCODEN     #1{\unskip}     \fi
\ifx \showDOI      \undefined \def \showDOI       #1{#1}\fi
\ifx \showISBNx    \undefined \def \showISBNx     #1{\unskip}     \fi
\ifx \showISBNxiii \undefined \def \showISBNxiii  #1{\unskip}     \fi
\ifx \showISSN     \undefined \def \showISSN      #1{\unskip}     \fi
\ifx \showLCCN     \undefined \def \showLCCN      #1{\unskip}     \fi
\ifx \shownote     \undefined \def \shownote      #1{#1}          \fi
\ifx \showarticletitle \undefined \def \showarticletitle #1{#1}   \fi
\ifx \showURL      \undefined \def \showURL       {\relax}        \fi
\providecommand\bibfield[2]{#2}
\providecommand\bibinfo[2]{#2}
\providecommand\natexlab[1]{#1}
\providecommand\showeprint[2][]{arXiv:#2}

\bibitem[\protect\citeauthoryear{Beaver, Micali, and Rogaway}{Beaver
  et~al\mbox{.}}{1990}]%
        {beaver1990round}
\bibfield{author}{\bibinfo{person}{Donald Beaver}, \bibinfo{person}{Silvio
  Micali}, {and} \bibinfo{person}{Phillip Rogaway}.}
  \bibinfo{year}{1990}\natexlab{}.
\newblock \showarticletitle{The round complexity of secure protocols}. In
  \bibinfo{booktitle}{\emph{STOC}}, Vol.~\bibinfo{volume}{90}.
  \bibinfo{pages}{503--513}.
\newblock


\bibitem[\protect\citeauthoryear{Blanton, Atallah, Frikken, and
  Malluhi}{Blanton et~al\mbox{.}}{2012}]%
        {blanton2012secure}
\bibfield{author}{\bibinfo{person}{Marina Blanton}, \bibinfo{person}{Mikhail~J
  Atallah}, \bibinfo{person}{Keith~B Frikken}, {and} \bibinfo{person}{Qutaibah
  Malluhi}.} \bibinfo{year}{2012}\natexlab{}.
\newblock \showarticletitle{Secure and efficient outsourcing of sequence
  comparisons}. In \bibinfo{booktitle}{\emph{European Symposium on Research in
  Computer Security}}. Springer, \bibinfo{pages}{505--522}.
\newblock


\bibitem[\protect\citeauthoryear{Carter, Lever, and Traynor}{Carter
  et~al\mbox{.}}{2014}]%
        {carter2014whitewash}
\bibfield{author}{\bibinfo{person}{Henry Carter}, \bibinfo{person}{Charles
  Lever}, {and} \bibinfo{person}{Patrick Traynor}.}
  \bibinfo{year}{2014}\natexlab{}.
\newblock \showarticletitle{Whitewash: Outsourcing garbled circuit generation
  for mobile devices}. In \bibinfo{booktitle}{\emph{Proceedings of the 30th
  Annual Computer Security Applications Conference}}.
  \bibinfo{pages}{266--275}.
\newblock


\bibitem[\protect\citeauthoryear{Carter, Mood, Traynor, and Butler}{Carter
  et~al\mbox{.}}{2016}]%
        {carter2016secure}
\bibfield{author}{\bibinfo{person}{Henry Carter}, \bibinfo{person}{Benjamin
  Mood}, \bibinfo{person}{Patrick Traynor}, {and} \bibinfo{person}{Kevin
  Butler}.} \bibinfo{year}{2016}\natexlab{}.
\newblock \showarticletitle{Secure outsourced garbled circuit evaluation for
  mobile devices}.
\newblock \bibinfo{journal}{\emph{Journal of Computer Security}}
  \bibinfo{volume}{24}, \bibinfo{number}{2} (\bibinfo{year}{2016}),
  \bibinfo{pages}{137--180}.
\newblock


\bibitem[\protect\citeauthoryear{Chen, Chen, Huang, and Zhong}{Chen
  et~al\mbox{.}}{2016a}]%
        {chen2016privacy}
\bibfield{author}{\bibinfo{person}{Zhili Chen}, \bibinfo{person}{Lin Chen},
  \bibinfo{person}{Liusheng Huang}, {and} \bibinfo{person}{Hong Zhong}.}
  \bibinfo{year}{2016}\natexlab{a}.
\newblock \showarticletitle{On privacy-preserving cloud auction}. In
  \bibinfo{booktitle}{\emph{2016 IEEE 35th Symposium on Reliable Distributed
  Systems (SRDS)}}. IEEE, \bibinfo{pages}{279--288}.
\newblock


\bibitem[\protect\citeauthoryear{Chen, Chen, Huang, and Zhong}{Chen
  et~al\mbox{.}}{2016b}]%
        {chen2016towards}
\bibfield{author}{\bibinfo{person}{Zhili Chen}, \bibinfo{person}{Lin Chen},
  \bibinfo{person}{Liusheng Huang}, {and} \bibinfo{person}{Hong Zhong}.}
  \bibinfo{year}{2016}\natexlab{b}.
\newblock \showarticletitle{Towards secure spectrum auction: both bids and
  bidder locations matter: poster}. In \bibinfo{booktitle}{\emph{Proceedings of
  the 17th ACM International Symposium on Mobile Ad Hoc Networking and
  Computing}}. ACM, \bibinfo{pages}{361--362}.
\newblock


\bibitem[\protect\citeauthoryear{Chen, Huang, and Chen}{Chen
  et~al\mbox{.}}{2015}]%
        {chen2015itsec}
\bibfield{author}{\bibinfo{person}{Zhili Chen}, \bibinfo{person}{Liusheng
  Huang}, {and} \bibinfo{person}{Lin Chen}.} \bibinfo{year}{2015}\natexlab{}.
\newblock \showarticletitle{ITSEC: An information-theoretically secure
  framework for truthful spectrum auctions}. In \bibinfo{booktitle}{\emph{2015
  IEEE Conference on Computer Communications (INFOCOM)}}. IEEE,
  \bibinfo{pages}{2065--2073}.
\newblock


\bibitem[\protect\citeauthoryear{Chen, Huang, Li, Yang, Miao, Tian, and
  Wang}{Chen et~al\mbox{.}}{2014}]%
        {chen2014ps}
\bibfield{author}{\bibinfo{person}{Zhili Chen}, \bibinfo{person}{Liusheng
  Huang}, \bibinfo{person}{Lu Li}, \bibinfo{person}{Wei Yang},
  \bibinfo{person}{Haibo Miao}, \bibinfo{person}{Miaomiao Tian}, {and}
  \bibinfo{person}{Fei Wang}.} \bibinfo{year}{2014}\natexlab{}.
\newblock \showarticletitle{PS-TRUST: Provably secure solution for truthful
  double spectrum auctions}. In \bibinfo{booktitle}{\emph{IEEE INFOCOM
  2014-IEEE Conference on Computer Communications}}. IEEE,
  \bibinfo{pages}{1249--1257}.
\newblock


\bibitem[\protect\citeauthoryear{Chen, Ni, Zhong, Zhang, and Cui}{Chen
  et~al\mbox{.}}{2019}]%
        {chen2019differentially}
\bibfield{author}{\bibinfo{person}{Zhili Chen}, \bibinfo{person}{Tianjiao Ni},
  \bibinfo{person}{Hong Zhong}, \bibinfo{person}{Shun Zhang}, {and}
  \bibinfo{person}{Jie Cui}.} \bibinfo{year}{2019}\natexlab{}.
\newblock \showarticletitle{Differentially Private Double Spectrum Auction with
  Approximate Social Welfare Maximization}.
\newblock \bibinfo{journal}{\emph{IEEE Transactions on Information Forensics
  and Security}} (\bibinfo{year}{2019}).
\newblock


\bibitem[\protect\citeauthoryear{Chen, Wei, Zhong, Cui, Xu, and Zhang}{Chen
  et~al\mbox{.}}{2017}]%
        {chen2017secure}
\bibfield{author}{\bibinfo{person}{Zhili Chen}, \bibinfo{person}{Xuemei Wei},
  \bibinfo{person}{Hong Zhong}, \bibinfo{person}{Jie Cui}, \bibinfo{person}{Yan
  Xu}, {and} \bibinfo{person}{Shun Zhang}.} \bibinfo{year}{2017}\natexlab{}.
\newblock \showarticletitle{Secure, efficient and practical double spectrum
  auction}. In \bibinfo{booktitle}{\emph{2017 IEEE/ACM 25th International
  Symposium on Quality of Service (IWQoS)}}. IEEE, \bibinfo{pages}{1--6}.
\newblock


\bibitem[\protect\citeauthoryear{Cheng, Slien, Zhang, Zhu, Wang, and
  Zhong}{Cheng et~al\mbox{.}}{2019}]%
        {cheng2019towards}
\bibfield{author}{\bibinfo{person}{Ke Cheng}, \bibinfo{person}{Yulong Slien},
  \bibinfo{person}{Yuzhen Zhang}, \bibinfo{person}{Xinghui Zhu},
  \bibinfo{person}{Liangmin Wang}, {and} \bibinfo{person}{Hong Zhong}.}
  \bibinfo{year}{2019}\natexlab{}.
\newblock \showarticletitle{Towards Efficient Privacy-Preserving Auction
  Mechanism for Two-Sided Cloud Markets}. In \bibinfo{booktitle}{\emph{ICC
  2019-2019 IEEE International Conference on Communications (ICC)}}. IEEE,
  \bibinfo{pages}{1--6}.
\newblock


\bibitem[\protect\citeauthoryear{Goldreich}{Goldreich}{2007}]%
        {goldreich2007foundations}
\bibfield{author}{\bibinfo{person}{Oded Goldreich}.}
  \bibinfo{year}{2007}\natexlab{}.
\newblock \bibinfo{booktitle}{\emph{Foundations of cryptography: volume 1,
  basic tools}}.
\newblock \bibinfo{publisher}{Cambridge university press}.
\newblock


\bibitem[\protect\citeauthoryear{Goldreich}{Goldreich}{2009}]%
        {goldreich2009foundations}
\bibfield{author}{\bibinfo{person}{Oded Goldreich}.}
  \bibinfo{year}{2009}\natexlab{}.
\newblock \bibinfo{booktitle}{\emph{Foundations of cryptography: volume 2,
  basic applications}}.
\newblock \bibinfo{publisher}{Cambridge university press}.
\newblock


\bibitem[\protect\citeauthoryear{Hazay and Lindell}{Hazay and Lindell}{2010}]%
        {hazay2010efficient}
\bibfield{author}{\bibinfo{person}{Carmit Hazay} {and} \bibinfo{person}{Yehuda
  Lindell}.} \bibinfo{year}{2010}\natexlab{}.
\newblock \bibinfo{booktitle}{\emph{Efficient secure two-party protocols:
  Techniques and constructions}}.
\newblock \bibinfo{publisher}{Springer Science \& Business Media}.
\newblock


\bibitem[\protect\citeauthoryear{Huang, Evans, Katz, and Malka}{Huang
  et~al\mbox{.}}{2011}]%
        {huang2011faster}
\bibfield{author}{\bibinfo{person}{Yan Huang}, \bibinfo{person}{David Evans},
  \bibinfo{person}{Jonathan Katz}, {and} \bibinfo{person}{Lior Malka}.}
  \bibinfo{year}{2011}\natexlab{}.
\newblock \showarticletitle{Faster secure two-party computation using garbled
  circuits.}. In \bibinfo{booktitle}{\emph{USENIX Security Symposium}},
  Vol.~\bibinfo{volume}{201}. \bibinfo{pages}{331--335}.
\newblock


\bibitem[\protect\citeauthoryear{Huang, Katz, and Evans}{Huang
  et~al\mbox{.}}{2013}]%
        {huang2013efficient}
\bibfield{author}{\bibinfo{person}{Yan Huang}, \bibinfo{person}{Jonathan Katz},
  {and} \bibinfo{person}{David Evans}.} \bibinfo{year}{2013}\natexlab{}.
\newblock \showarticletitle{Efficient secure two-party computation using
  symmetric cut-and-choose}. In \bibinfo{booktitle}{\emph{Annual Cryptology
  Conference}}. Springer, \bibinfo{pages}{18--35}.
\newblock


\bibitem[\protect\citeauthoryear{Katz, Ranellucci, Rosulek, and Wang}{Katz
  et~al\mbox{.}}{2018}]%
        {katz2018optimizing}
\bibfield{author}{\bibinfo{person}{Jonathan Katz}, \bibinfo{person}{Samuel
  Ranellucci}, \bibinfo{person}{Mike Rosulek}, {and} \bibinfo{person}{Xiao
  Wang}.} \bibinfo{year}{2018}\natexlab{}.
\newblock \showarticletitle{Optimizing authenticated garbling for faster secure
  two-party computation}. In \bibinfo{booktitle}{\emph{Annual International
  Cryptology Conference}}. Springer, \bibinfo{pages}{365--391}.
\newblock


\bibitem[\protect\citeauthoryear{Kolesnikov, Nielsen, Rosulek, Trieu, and
  Trifiletti}{Kolesnikov et~al\mbox{.}}{2017}]%
        {kolesnikov2017duplo}
\bibfield{author}{\bibinfo{person}{Vladimir Kolesnikov},
  \bibinfo{person}{Jesper~Buus Nielsen}, \bibinfo{person}{Mike Rosulek},
  \bibinfo{person}{Ni Trieu}, {and} \bibinfo{person}{Roberto Trifiletti}.}
  \bibinfo{year}{2017}\natexlab{}.
\newblock \showarticletitle{DUPLO: unifying cut-and-choose for garbled
  circuits}. In \bibinfo{booktitle}{\emph{Proceedings of the 2017 ACM SIGSAC
  Conference on Computer and Communications Security}}. ACM,
  \bibinfo{pages}{3--20}.
\newblock


\bibitem[\protect\citeauthoryear{Kolesnikov, Sadeghi, and Schneider}{Kolesnikov
  et~al\mbox{.}}{2009}]%
        {kolesnikov2009improved}
\bibfield{author}{\bibinfo{person}{Vladimir Kolesnikov},
  \bibinfo{person}{Ahmad-Reza Sadeghi}, {and} \bibinfo{person}{Thomas
  Schneider}.} \bibinfo{year}{2009}\natexlab{}.
\newblock \showarticletitle{Improved garbled circuit building blocks and
  applications to auctions and computing minima}. In
  \bibinfo{booktitle}{\emph{International Conference on Cryptology and Network
  Security}}. Springer, \bibinfo{pages}{1--20}.
\newblock


\bibitem[\protect\citeauthoryear{Kolesnikov and Schneider}{Kolesnikov and
  Schneider}{2008}]%
        {kolesnikov2008improved}
\bibfield{author}{\bibinfo{person}{Vladimir Kolesnikov} {and}
  \bibinfo{person}{Thomas Schneider}.} \bibinfo{year}{2008}\natexlab{}.
\newblock \showarticletitle{Improved garbled circuit: Free XOR gates and
  applications}. In \bibinfo{booktitle}{\emph{International Colloquium on
  Automata, Languages, and Programming}}. Springer, \bibinfo{pages}{486--498}.
\newblock


\bibitem[\protect\citeauthoryear{Lindell}{Lindell}{2016}]%
        {lindell2016fast}
\bibfield{author}{\bibinfo{person}{Yehuda Lindell}.}
  \bibinfo{year}{2016}\natexlab{}.
\newblock \showarticletitle{Fast cut-and-choose-based protocols for malicious
  and covert adversaries}.
\newblock \bibinfo{journal}{\emph{Journal of Cryptology}} \bibinfo{volume}{29},
  \bibinfo{number}{2} (\bibinfo{year}{2016}), \bibinfo{pages}{456--490}.
\newblock


\bibitem[\protect\citeauthoryear{Lindell and Pinkas}{Lindell and
  Pinkas}{2009}]%
        {lindell2009proof}
\bibfield{author}{\bibinfo{person}{Yehuda Lindell} {and} \bibinfo{person}{Benny
  Pinkas}.} \bibinfo{year}{2009}\natexlab{}.
\newblock \showarticletitle{A proof of security of Yao's protocol for two-party
  computation}.
\newblock \bibinfo{journal}{\emph{Journal of cryptology}} \bibinfo{volume}{22},
  \bibinfo{number}{2} (\bibinfo{year}{2009}), \bibinfo{pages}{161--188}.
\newblock


\bibitem[\protect\citeauthoryear{Lindell and Pinkas}{Lindell and
  Pinkas}{2012}]%
        {lindell2012secure}
\bibfield{author}{\bibinfo{person}{Yehuda Lindell} {and} \bibinfo{person}{Benny
  Pinkas}.} \bibinfo{year}{2012}\natexlab{}.
\newblock \showarticletitle{Secure two-party computation via cut-and-choose
  oblivious transfer}.
\newblock \bibinfo{journal}{\emph{Journal of cryptology}} \bibinfo{volume}{25},
  \bibinfo{number}{4} (\bibinfo{year}{2012}), \bibinfo{pages}{680--722}.
\newblock


\bibitem[\protect\citeauthoryear{Lindell and Pinkas}{Lindell and
  Pinkas}{2015}]%
        {lindell2015efficient}
\bibfield{author}{\bibinfo{person}{Yehuda Lindell} {and} \bibinfo{person}{Benny
  Pinkas}.} \bibinfo{year}{2015}\natexlab{}.
\newblock \showarticletitle{An efficient protocol for secure two-party
  computation in the presence of malicious adversaries}.
\newblock \bibinfo{journal}{\emph{Journal of Cryptology}} \bibinfo{volume}{28},
  \bibinfo{number}{2} (\bibinfo{year}{2015}), \bibinfo{pages}{312--350}.
\newblock


\bibitem[\protect\citeauthoryear{Mood, Gupta, Butler, and Feigenbaum}{Mood
  et~al\mbox{.}}{2014}]%
        {mood2014reuse}
\bibfield{author}{\bibinfo{person}{Benjamin Mood}, \bibinfo{person}{Debayan
  Gupta}, \bibinfo{person}{Kevin Butler}, {and} \bibinfo{person}{Joan
  Feigenbaum}.} \bibinfo{year}{2014}\natexlab{}.
\newblock \showarticletitle{Reuse it or lose it: More efficient secure
  computation through reuse of encrypted values}. In
  \bibinfo{booktitle}{\emph{Proceedings of the 2014 ACM SIGSAC Conference on
  Computer and Communications Security}}. \bibinfo{pages}{582--596}.
\newblock


\bibitem[\protect\citeauthoryear{Nejad, Mashayekhy, and Grosu}{Nejad
  et~al\mbox{.}}{2015}]%
        {nejad2015truthful}
\bibfield{author}{\bibinfo{person}{Mahyar~Movahed Nejad}, \bibinfo{person}{Lena
  Mashayekhy}, {and} \bibinfo{person}{Daniel Grosu}.}
  \bibinfo{year}{2015}\natexlab{}.
\newblock \showarticletitle{Truthful Greedy Mechanisms for Dynamic Virtual
  Machine Provisioning and Allocation in Clouds}.
\newblock \bibinfo{journal}{\emph{IEEE Transactions on Parallel and Distributed
  Systems}} \bibinfo{volume}{26}, \bibinfo{number}{2} (\bibinfo{year}{2015}),
  \bibinfo{pages}{594--603}.
\newblock


\bibitem[\protect\citeauthoryear{Rabin}{Rabin}{2005}]%
        {rabin2005exchange}
\bibfield{author}{\bibinfo{person}{Michael~O Rabin}.}
  \bibinfo{year}{2005}\natexlab{}.
\newblock \showarticletitle{How To Exchange Secrets with Oblivious Transfer.}
\newblock \bibinfo{journal}{\emph{IACR Cryptology ePrint Archive}}
  \bibinfo{volume}{2005} (\bibinfo{year}{2005}), \bibinfo{pages}{187}.
\newblock


\bibitem[\protect\citeauthoryear{Shi, Wu, and Li}{Shi et~al\mbox{.}}{2014}]%
        {shi2014rsmoa}
\bibfield{author}{\bibinfo{person}{Weijie Shi}, \bibinfo{person}{Chuan Wu},
  {and} \bibinfo{person}{Zongpeng Li}.} \bibinfo{year}{2014}\natexlab{}.
\newblock \showarticletitle{RSMOA: A revenue and social welfare maximizing
  online auction for dynamic cloud resource provisioning}. In
  \bibinfo{booktitle}{\emph{2014 IEEE 22nd International Symposium of Quality
  of Service (IWQoS)}}. IEEE, \bibinfo{pages}{41--50}.
\newblock


\bibitem[\protect\citeauthoryear{Wang, Ren, and Meng}{Wang
  et~al\mbox{.}}{2012}]%
        {wang2012cloud}
\bibfield{author}{\bibinfo{person}{Qian Wang}, \bibinfo{person}{Kui Ren}, {and}
  \bibinfo{person}{Xiaoqiao Meng}.} \bibinfo{year}{2012}\natexlab{}.
\newblock \showarticletitle{When cloud meets ebay: Towards effective pricing
  for cloud computing}. In \bibinfo{booktitle}{\emph{2012 Proceedings IEEE
  INFOCOM}}. IEEE, \bibinfo{pages}{936--944}.
\newblock


\bibitem[\protect\citeauthoryear{Wang, Ranellucci, and Katz}{Wang
  et~al\mbox{.}}{2017}]%
        {wang2017authenticated}
\bibfield{author}{\bibinfo{person}{Xiao Wang}, \bibinfo{person}{Samuel
  Ranellucci}, {and} \bibinfo{person}{Jonathan Katz}.}
  \bibinfo{year}{2017}\natexlab{}.
\newblock \showarticletitle{Authenticated garbling and efficient maliciously
  secure two-party computation}. In \bibinfo{booktitle}{\emph{Proceedings of
  the 2017 ACM SIGSAC Conference on Computer and Communications Security}}.
  ACM, \bibinfo{pages}{21--37}.
\newblock


\bibitem[\protect\citeauthoryear{Xu, Palanisamy, Tang, and Kumar}{Xu
  et~al\mbox{.}}{2017}]%
        {xu2017pads}
\bibfield{author}{\bibinfo{person}{Jinlai Xu}, \bibinfo{person}{Balaji
  Palanisamy}, \bibinfo{person}{Yuzhe Tang}, {and} \bibinfo{person}{SD~Madhu
  Kumar}.} \bibinfo{year}{2017}\natexlab{}.
\newblock \showarticletitle{PADS: Privacy-preserving Auction Design for
  Allocating Dynamically Priced Cloud Resources}. In
  \bibinfo{booktitle}{\emph{2017 IEEE 3rd International Conference on
  Collaboration and Internet Computing (CIC)}}. IEEE, \bibinfo{pages}{87--96}.
\newblock


\bibitem[\protect\citeauthoryear{Xu, Chen, and Zhong}{Xu et~al\mbox{.}}{2019}]%
        {xu2019privacy}
\bibfield{author}{\bibinfo{person}{Yin Xu}, \bibinfo{person}{Zhili Chen}, {and}
  \bibinfo{person}{Hong Zhong}.} \bibinfo{year}{2019}\natexlab{}.
\newblock \showarticletitle{Privacy-preserving Double Auction Mechanism Based
  on Homomorphic Encryption and Sorting Networks}.
\newblock \bibinfo{journal}{\emph{arXiv preprint arXiv:1909.07637}}
  (\bibinfo{year}{2019}).
\newblock


\bibitem[\protect\citeauthoryear{Yao}{Yao}{1986}]%
        {yao1986generate}
\bibfield{author}{\bibinfo{person}{Andrew Chi-Chih Yao}.}
  \bibinfo{year}{1986}\natexlab{}.
\newblock \showarticletitle{How to generate and exchange secrets}. In
  \bibinfo{booktitle}{\emph{27th Annual Symposium on Foundations of Computer
  Science (sfcs 1986)}}. IEEE, \bibinfo{pages}{162--167}.
\newblock


\bibitem[\protect\citeauthoryear{Zaman and Grosu}{Zaman and Grosu}{2011}]%
        {zaman2011combinatorial}
\bibfield{author}{\bibinfo{person}{Sharrukh Zaman} {and}
  \bibinfo{person}{Daniel Grosu}.} \bibinfo{year}{2011}\natexlab{}.
\newblock \showarticletitle{Combinatorial auction-based dynamic vm provisioning
  and allocation in clouds}. In \bibinfo{booktitle}{\emph{2011 IEEE Third
  International Conference on Cloud Computing Technology and Science}}. IEEE,
  \bibinfo{pages}{107--114}.
\newblock


\bibitem[\protect\citeauthoryear{Zaman and Grosu}{Zaman and Grosu}{2013a}]%
        {zaman2013combinatorial2}
\bibfield{author}{\bibinfo{person}{Sharrukh Zaman} {and}
  \bibinfo{person}{Daniel Grosu}.} \bibinfo{year}{2013}\natexlab{a}.
\newblock \showarticletitle{Combinatorial auction-based allocation of virtual
  machine instances in clouds}.
\newblock \bibinfo{journal}{\emph{J. Parallel and Distrib. Comput.}}
  \bibinfo{volume}{73}, \bibinfo{number}{4} (\bibinfo{year}{2013}),
  \bibinfo{pages}{495--508}.
\newblock


\bibitem[\protect\citeauthoryear{Zaman and Grosu}{Zaman and Grosu}{2013b}]%
        {zaman2013combinatorial}
\bibfield{author}{\bibinfo{person}{Sharrukh Zaman} {and}
  \bibinfo{person}{Daniel Grosu}.} \bibinfo{year}{2013}\natexlab{b}.
\newblock \showarticletitle{A combinatorial auction-based mechanism for dynamic
  VM provisioning and allocation in clouds}.
\newblock \bibinfo{journal}{\emph{IEEE Transactions on Cloud Computing}}
  \bibinfo{volume}{1}, \bibinfo{number}{2} (\bibinfo{year}{2013}),
  \bibinfo{pages}{129--141}.
\newblock


\bibitem[\protect\citeauthoryear{Zhang, Li, and Wu}{Zhang
  et~al\mbox{.}}{2014}]%
        {zhang2014dynamic}
\bibfield{author}{\bibinfo{person}{Linquan Zhang}, \bibinfo{person}{Zongpeng
  Li}, {and} \bibinfo{person}{Chuan Wu}.} \bibinfo{year}{2014}\natexlab{}.
\newblock \showarticletitle{Dynamic resource provisioning in cloud computing: A
  randomized auction approach}. In \bibinfo{booktitle}{\emph{IEEE INFOCOM
  2014-IEEE Conference on Computer Communications}}. IEEE,
  \bibinfo{pages}{433--441}.
\newblock


\bibitem[\protect\citeauthoryear{Zhang, Wu, Li, and Lau}{Zhang
  et~al\mbox{.}}{2018}]%
        {zhang2018truthful}
\bibfield{author}{\bibinfo{person}{Xiaoxi Zhang}, \bibinfo{person}{Chuan Wu},
  \bibinfo{person}{Zongpeng Li}, {and} \bibinfo{person}{Francis~CM Lau}.}
  \bibinfo{year}{2018}\natexlab{}.
\newblock \showarticletitle{A Truthful (1-$\varepsilon$)-Optimal Mechanism for
  On-demand Cloud Resource Provisioning}.
\newblock \bibinfo{journal}{\emph{IEEE Transactions on Cloud Computing}}
  (\bibinfo{year}{2018}).
\newblock


\bibitem[\protect\citeauthoryear{Zhu, Li, Wu, Shin, and Chen}{Zhu
  et~al\mbox{.}}{2014}]%
        {zhu2014differentially}
\bibfield{author}{\bibinfo{person}{Ruihao Zhu}, \bibinfo{person}{Zhijing Li},
  \bibinfo{person}{Fan Wu}, \bibinfo{person}{Kang Shin}, {and}
  \bibinfo{person}{Guihai Chen}.} \bibinfo{year}{2014}\natexlab{}.
\newblock \showarticletitle{Differentially private spectrum auction with
  approximate revenue maximization}. In \bibinfo{booktitle}{\emph{Proceedings
  of the 15th ACM international symposium on mobile ad hoc networking and
  computing}}. ACM, \bibinfo{pages}{185--194}.
\newblock


\end{thebibliography}

%
%
%
%
%
%
%
%

\newpage

\end{document}